\documentclass{elsarticle}
\usepackage{amsmath,amsthm}
\usepackage{amsfonts}
\usepackage{amssymb}

\usepackage{url}
\usepackage{hyperref}

\theoremstyle{plain}
\newtheorem{theorem}{Theorem}
\theoremstyle{definition}
\newtheorem{lemma}[theorem]{Lemma}
\newtheorem{definition}[theorem]{Definition}
\newtheorem{remark}[theorem]{Remark}

\newtheorem{corollary}[theorem]{Corollary}
\newtheorem{example}[theorem]{Example}

\newtheorem{proposition}[theorem]{Proposition}

\journal{Journal of Symbolic Computation}

\usepackage[T1]{fontenc}
\usepackage[utf8]{inputenc}
\usepackage{thmtools}
\usepackage{tikz-cd}
\usepackage{hhline}

\newenvironment{algorithm}[1][\relax]{%
  \medskip\par\noindent\underline{Algorithm}%
  \ifx#1\relax\relax\else:~#1\fi}{\par\smallskip}

\usepackage{multicol,multirow}
\usepackage{enumitem}

\usepackage[bordercolor=TodoColour, 
 backgroundcolor=TodoColour,  
 linecolor=TodoColour, 
 textsize=todosize]{todonotes}

\newlist{steps}{enumerate}{1}
\setlist[steps]{%
  label={\textbf{Step~\arabic*.}},%
  ref={step~\arabic*},%
  leftmargin=1.4142cm}
\setlist[enumerate,1]{label={(\alph*)}}

\usepackage{xcolor}
\definecolor{TodoColour}{HTML}{CE9F6F} 
\definecolor{CSTodoColour}{HTML}{78779B}
\definecolor{MyTodoColour}{HTML}{A5BA93}
\definecolor{LinkColour}{HTML}{044F67}
\definecolor{CiteColour}{HTML}{044F67}
\definecolor{URLColour}{HTML}{044F67}
\definecolor{EmphColour}{HTML}{86ABA5} 
\definecolor{SuperEmphColour}{HTML}{89729E}
\definecolor{darkblue}{rgb}{0.0, 0.0, 0.55}

\usepackage{hyperref}

\hypersetup{
  final,
  colorlinks, 
  urlcolor=black,
  linkcolor=black,
  citecolor=black}

\DeclareMathOperator{\diag}{diag}

\DeclareMathOperator{\denom}{denom}

\DeclareMathOperator{\hyp}{hyp}
\DeclareMathOperator{\trunc}{Trunc}
\DeclareMathOperator{\slc}{slc}

\renewcommand{\leq}{\leqslant}
\renewcommand{\geq}{\geqslant}
\renewcommand{\kappa}{\varkappa}
\renewcommand{\rho}{\varrho}

\newcommand{\qqtext}[1]{\qquad\text{#1}\qquad}
\newcommand{\qqtextq}[1]{\qquad\text{#1}\quad}

\newcommand{\qqwhere}{\qqtextq{where}}

\renewcommand{\)}{{)\!)}}

\newcommand{\Mat}[3]{#1^{#2\times#3}}
\newcommand{\MatGr}[2]{\operatorname{GL}_{#2}(#1)}

\newcommand{\ID}[1][\relax]{\mathbf{1}\ifx#1\relax\relax\else_{#1}\fi}
\def\ZEROAUX#1,#2;{_{#1\times#2}}
\newcommand{\ZERO}[1][\relax]{\mathbf{0}\ifx#1\relax\relax\else\ZEROAUX#1;\fi}

\newcommand{\RHS}{right-hand side}
\newcommand{\LHS}{left-hand side}
\newcommand{\lhs}{l.h.sol }

\newcommand{\Kbar}{\overline{K}}

\newcommand{\F}{\mathbb{F}}

\newcommand{\N}{\mathbb{N}}
\newcommand{\Z}{\mathbb{Z}}

\usepackage{latexsym,amscd,url}

\newcommand{\Ct}{{C} }
\newcommand{\De}{\delta}
\newcommand{\T}{\tau}
\newcommand{\GL}{\mathop{\rm GL}}

\newcommand{\Or}{\mathop{\rm O}}
\newcommand{\ord}{\mathop{v}} 

         {}
\newcounter{ex}

\def\wt{\widetilde}
\def\wh{\widehat}

\renewcommand{\)}{{)\!)}}

\usepackage{latexsym,amscd,url}

\def\wt{\widetilde}
\def\wh{\widehat}

\definecolor{benimidori}{HTML}{78779B}

\begin{document}

\begin{frontmatter}
\title{Hypergeometric Solutions of Linear Difference Systems \\
{\large In Memory of Marko Petkov\v{s}ek and Manuel Bronstein}}

\author{Moulay Barkatou}
\ead{moulay.barkatou@unilim.fr}
\affiliation{organization={Universit\'e de Limoges, XLIM},addressline={123, Av.  A. Thomas},city={87060 Limoges cedex},
country={France}}

\author{Mark van Hoeij\fnref{fn1}}
\ead{hoeij@math.fsu.edu}
\affiliation{organization={Florida State University},
city={Tallahassee, FL},postcode={FL 32306},
country={USA}}
\fntext[fn1]{Supported by NSF grant 2007959.}

\author{Johannes Middeke}
\ead{jmiddeke@risc.jku.at}
\affiliation{organization={RISC, Johannes Kepler University},
city={4040 Linz},
country={Austria}}

\author{Yi Zhou}
\ead{yzhou@math.fsu.edu}
\affiliation{organization={Florida State University},
city={Tallahassee, FL},postcode={FL 32306},
country={USA}}

\maketitle
 
\begin{abstract}
  We extend Petkov\v{s}ek's algorithm for computing hypergeometric solutions of scalar difference equations
  to the case of difference systems $\tau(Y) = M Y$, with $M \in \GL_n(\Ct(x))$, where $\tau$ is the shift operator.
 Hypergeometric solutions are solutions of the form $\gamma P$ where $P \in \Ct(x)^n$ and $\gamma$ is a hypergeometric term over $\Ct(x)$,
 i.e. ${\tau(\gamma)}/{\gamma}  \in \Ct(x)$.
 Our contributions concern efficient computation of a set of candidates for ${\tau(\gamma)}/{\gamma}$ which we write as $\lambda = c\frac{A}{B}$ with monic $A, B \in \Ct[x]$, $c \in \Ct^*$.
Factors of the denominators of $M^{-1}$ and $M$ give candidates for $A$ and $B$, while another algorithm is needed for $c$.
 We use super-reduction algorithm
 to compute candidates for $c$, as well as other ingredients to reduce the list of candidates for $A/B$. 
 To further reduce  the number of candidates $A/B$, we bound the {\em type} of $A/B$ by bounding {\em local types}.   
Our algorithm has been implemented in Maple and experiments show that our implementation can handle systems of high dimension, which is useful for factoring operators. 
\end{abstract}

\end{frontmatter}

\section{Introduction}\label{sec:hypsolintro}

Let $C$ be a subfield of $\mathbb{C}$. Let $\tau$ be the shift operator, which sends
a function $y(x)$ to $y(x+1)$.
A nonzero element $y$ in a universal extension  \cite[Chapter 6]{van2003galois} 
of $C(x)$ is called \emph{hypergeometric} if $\tau(y)/y\in C(x)$, in other words, $y$ is a solution of a first order operator
over $\Ct(x)$.
If $y_1, y_2$ are hypergeometric then $y_1 y_2$ is hypergeometric as well.
For $\lambda \in C(x)$, denote by $\hyp(\lambda)$ a nonzero solution of $\tau-\lambda$. 
The notation is defined up to a nonzero constant.

The first algorithm to compute hypergeometric solutions of an operator was given by
Pet\-kov\-\v{s}ek  \cite{Petkovsek1992}.  One writes hypergeometric solutions of $L = a_n \tau^n + \cdots + a_0 \tau^0 \in C[x][\tau]$ over $C(x)$ in this format:
$$
  y = \hyp(\lambda)P
$$
where $\lambda = c \frac{A}{B}$ with
$A,B,P \in C[x]$,  $A$ and $B$ are coprime and monic, and $c \in C^* = C \setminus \{0\}$.
Such $y$ can be rewritten as $\hyp(\lambda')$ where $\lambda' = \lambda \frac{\tau(P)}P$.
Allowing a polynomial factor $P$ in candidate solutions $y = \hyp(\lambda)P$
makes it easier to restrict $\lambda$ to a computable set. Petkov\v{s}ek's algorithm
works as follows:
\begin{steps}[leftmargin=0.75in]
\item[Step P1] Petkov\v{s}ek proves that it suffices to consider $A$, $B$ where $A \mid a_0$ and  $B \mid \tau^{1-n}(a_n)$.
  This leaves a finite set of candidates for $A/B$.
\item[Step P2]\label{step:2} Compute candidates for $c$.
\item[Step P3]\label{step:3} For each candidate $\lambda = c A/B$: \\
  Construct an operator $L_\lambda$ whose solutions are the
  solutions of $L$ divided by $\hyp(\lambda)$.
  For all polynomial solutions $P$ of $L_\lambda$: \ join  $\hyp(\lambda) P$ to the output.
\end{steps}

Consider a  $n$-dimensional system over $C(x)$:
\begin{equation}
  \tag{\textsc{sys}}
  \label{eq:sys}
  \tau(Y) = M Y,
  \qqwhere
  M \in \MatGr{C(x)}n.
\end{equation}
The goal of this paper is to design efficient algorithms for finding its hypergeometric solutions. 
A hypergeometric solution has the form
$\hyp(\lambda)P$
where $P$ is an $n$-dimensional column vector over $C(x)$.
If $\tilde{P} = r P$ and $r, \lambda \in \Ct(x)$
then $\hyp(\lambda)\tilde{P} = \hyp(\lambda \frac{\tau(r)}r) P$ so
we may assume without loss of generality that $P$ is in $C[x]^n$ and is a {\em primitive vector}
(the {\em content}, the gcd of entries, is 1).  

Before his sudden passing, Bronstein observed that Petkov\v{s}ek's strategy works for systems as well \cite{bronsteinprivate}
by replacing the bounds $a_0$ and $\tau^{1-n}(a_n)$ for $A$ and $B$ with the denominators of $M^{-1}$ and $M$.
Thus, the {\em Bronstein-Petkov\v{s}ek} strategy for hypergeometric solutions is as follows:
\begin{steps}
\item[BP1]\label{step:1}  Construct a set of candidates for $A/B$: 
$$\mathcal{S}:=\{\frac{A}{B}: A,B\in C[x]\text{ are monic}, \ A\mid \denom(M^{-1}), \ B\mid \denom(M)\}.$$
\item[BP2] For each candidate $A/B$ compute candidates for $c$.
\item[BP3] For each candidate $\lambda = c A/B$: compute all polynomial solutions $P \in C[x]^n$ of $\tau(P) = \lambda^{-1} M P$
and join  $\hyp(\lambda) P$ to the output.
\end{steps}
Remarkably, Bronstein's proof for Step BP1 (Theorem~\ref{thm:AB})
is actually easier than Petkov\v{s}ek's proof for Step P1, despite the fact that it is more general.
To obtain an algorithm, Bronstein still needed a way to find candidates for $c$, which we will give in Section~\ref{sec:hypsolgenexp}.

The BP-strategy also applies to other cases (e.g. $q$-difference systems, or the multi-basic case \cite{denomBound})
provided that one can compute candidates for $c$, and polynomial solutions.
For operators $L \in C(x)[\tau]$, the paper \cite{VANHOEIJ1999109} addressed the following issues in Petkov\v{s}ek's algorithm:
\begin{enumerate}
\item The number of candidates $A/B$ can be much larger than it needs to be.
\item As a side effect, the algorithm can produce duplicate solutions.
\end{enumerate}
For systems, the same issues arise in the BP-strategy. A goal in this paper is to address these issues.
Of course one might discard duplicate solutions, or take steps to prevent them, but that still leaves issue (a). 
Reducing the number of candidates as much as possible leads to a more efficient algorithm and eliminates issue (b) as a side effect.
A key idea is that rather than bounding $A/B$ using the denominators of $M^{-1}$ and $M$,
we bound the {\em type} of $A/B$ by bounding {\em local types}.

In \cite{AbramovPetkovsekRyabenko2015} another algorithm for computing hypergeometric solutions of difference systems is presented. It consists in reducing the input system to one or more scalar equations and then use Petkov\v{s}ek's algorithm. 
This is similar to the cyclic vector method, which limits the dimension for which the algorithm can be efficient
(see Section~\ref{SectionNew}).
Experiments show that our algorithm (implementation at~\cite{HypSolsSystems})
can handle systems of high dimension, which will be useful for the application in factoring operators discussed in Section~\ref{sec:bronsteindifference}.

\section{Hypergeometric Solutions}\label{sec:hyper}

\begin{definition}[Hypergeometric]\label{def:hg}
Let $\F$ be a difference field, such as $\F=C(x)$ or $C((x^{-1}))$. A non-zero element $\gamma$ in a universal extension of $\F$
 is called \emph{hypergeometric} if $\tau(\gamma)=f\gamma$ for some $f\in\F^*$. In this case let $\hyp(f)$ denote $\gamma$. The notation is unique up to a constant.
  When $\gamma$ is hypergeometric, call the column vector $\gamma R$ (where $R\in \F^n$) a \emph{hypergeometric vector}.
\end{definition}
We focus on computing hypergeometric solutions because this is the key algorithmic step needed in other algorithms,
such computing Liouvillian solutions or factoring (Section~\ref{SectionNew}).
It is not hard to verify, for $f, f_1, f_2 \in \F^*$: 
\begin{itemize}
    \item[(i)] $\hyp(f_1)\hyp(f_2)=\hyp(f_1f_2)$ up to a constant 
    \item[(ii)] $\hyp(\frac{\tau(f)}{f})=f$ up to a constant 
\end{itemize}

The formula
   $$ \hyp(f)=g\hyp\left(f\frac{g}{\tau(g)}\right)$$
shows that a hypergeometric vector can have different representations.
A hypergeometric vector over $C(x)$ can always be written as
$\gamma P$ where $\gamma$ is hypergeometric and $P\in C[x]^n$ is primitive. This is called the \emph{standard representation} of the hypergeometric vector.
It is unique up to a constant.

\section{Algorithm Version I}\label{sec:algv1}
In this section a basic version of the algorithm is presented, which follows the procedure in Section~\ref{sec:hypsolintro}. We first give the algorithm and then explain it in the follow-up sections.

\begin{algorithm}[Version I]~\\
Input: $\tau(Y)=MY$, where $M\in \GL_n(C(x))$ \\
Output: hypergeometric solutions of the system 
\begin{itemize}[leftmargin=1cm]
    \item[BP1]  
        Factor $d_1 := \denom(M)$ and $d_2 := \denom(M^{-1})$ in $C[x]$ to compute 
            $$\mathcal{S}:=\{\frac{A}{B}: A,B\in C[x]\text{ are monic}, A\mid d_2, B\mid d_1\}.$$
        In Section~\ref{sec:V1BP1} we show that $\mathcal{S}$ contains all $\frac{A}{B}$ that are needed.
    \item[BP2]    
        Generalized exponents and ``{slc}'' will be defined in Section~\ref{sec:hypsolgenexp}.
        Use the algorithm introduced in subsection \ref{sec:d} to compute 
        $$G:=\{\text{unramified generalized exponents of the system}\}.$$
        An {\em unramified generalized exponent} is in the form $cx^s(1+dx^{-1})$, where $(s,c,d)\in  \mathbb{Z}\times \Ct^*\times \Ct$.
        Let $\mathcal{H}:=$
        $$\{c\frac{A}{B}:cx^s(1+dx^{-1})\in G,\frac{A}{B}\in \mathcal{S}, \deg(A)-\deg(B)=s,d-\slc(\frac{A}{B})\in \mathbb{N}\}.$$
    \item[BP3]   
        Let $\mathrm{Sols}=\emptyset$. \\
        For each $c\frac{A}{B}\in \mathcal{H}$, solve the system $\tau(P)= c^{-1}\frac{B}{A}MP$ for (a basis of) polynomial solutions using the algorithm introduced in  \cite{AB98} (see also \cite{abramov1999}). Add $\hyp(c\frac{A}{B})P$ to $\mathrm{Sols}$ for any polynomial solution $P$. \\
        Return $\mathrm{Sols}$ as the output.
\end{itemize}  
\end{algorithm}

\subsection{Step BP1}\label{sec:V1BP1} 

\begin{theorem}\label{thm:AB}
  Suppose $Y=\hyp(c \frac{A}B) P$ is a hypergeometric solution 
  of equation~(\ref{eq:sys}) that is in the standard form, where $c \in C^*$ and $A,B \in C[x]$ are monic and coprime. Then
  $$A \mid \denom(M^{-1}),\quad B \mid \denom(M).$$
\end{theorem}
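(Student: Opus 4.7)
The plan is to turn the hypergeometric ansatz into a polynomial identity in $C[x]^n$ and then read off the two divisibilities by comparing contents (gcd of entries).

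First, since $\tau(\hyp(\lambda)) = \lambda\,\hyp(\lambda)$ with $\lambda = cA/B$, substituting $Y = \hyp(\lambda)P$ into $\tau(Y)=MY$ and cancelling the scalar $\hyp(\lambda)\neq 0$ yields
$$
\lambda\,\tau(P) \;=\; M P, \qquad \text{i.e.,} \qquad c A\,\tau(P) \;=\; B\,M P.
$$
Let $d_1 := \denom(M)$, so that $d_1 M \in C[x]^{n\times n}$. Multiplying through by $d_1$ gives the clean polynomial identity
$$
c A\, d_1\, \tau(P) \;=\; B\,(d_1 M)\,P \qquad \text{in } C[x]^n.
$$

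Next, I would compare the contents of both sides. Because $P$ is primitive, so is $\tau(P)$ (the shift $\tau$ is a $C$-algebra automorphism of $C[x]$). Hence the left-hand side has content $Ad_1$ up to a unit (the factor $c$ is a unit, and $A,d_1$ are monic). Writing $g \in C[x]$ for the content of $(d_1 M)P$, the right-hand side has content $Bg$. Matching these gives an equation $A d_1 = u\, B g$ in $C[x]$ with $u \in C^*$, so $B \mid A d_1$. Coprimality of $A$ and $B$ then forces $B \mid d_1 = \denom(M)$.

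For the divisibility $A \mid \denom(M^{-1})$, I would repeat the argument with $M^{-1}$: from $\lambda\,\tau(P) = MP$ one gets $P = \lambda^{-1}\lambda\cdot M^{-1}\cdot\tau P\cdot$... more directly, $B\,P = c A\, M^{-1}\tau(P)$, and multiplying by $d_2 := \denom(M^{-1})$ yields
$$
B\, d_2\, P \;=\; c A\,(d_2 M^{-1})\,\tau(P) \qquad \text{in } C[x]^n.
$$
The left-hand side has content $B d_2$ (since $P$ is primitive), and the right-hand side has content $Ah$ where $h$ is the content of $(d_2 M^{-1})\tau(P)$. Thus $B d_2 = u'\, A h$ in $C[x]$, and $\gcd(A,B)=1$ gives $A \mid d_2$, as required.

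The only real work is the content bookkeeping: content is multiplicative under scalar multiplication by polynomials, primitivity is preserved by $\tau$, and units in $C[x]$ are exactly $C^*$, so all the ``up to a unit'' factors collapse harmlessly. I would expect this to be the main (and only) subtlety; the rest is a one-line substitution plus two parallel applications of Gauss-type reasoning using $\gcd(A,B)=1$.
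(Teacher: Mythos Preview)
Your proof is correct and follows essentially the same route as the paper's: substitute $Y=\hyp(cA/B)P$ into the system, clear denominators using $d_1=\denom(M)$ (resp.\ $d_2=\denom(M^{-1})$), and use primitivity of $\tau(P)$ (resp.\ $P$) together with $\gcd(A,B)=1$ to extract $B\mid d_1$ (resp.\ $A\mid d_2$). The only cosmetic difference is that you phrase the key step as a comparison of contents of both sides, whereas the paper argues directly that $B$ divides each entry of the left-hand side and then invokes primitivity; these are the same argument.
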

The theorem says that any hypergeometric solution $Y$ can be written as $Y = \hyp(c \frac{A}{B}) P$ for some primitive polynomial vector $P$, some constant $c$, and some $\frac{A}{B}\in\mathcal{S}$,
with $\mathcal{S}$ from Step BP1. 

The result holds not only for $C(x)$, but also for the field of fractions of a general difference ring if it is a UFD.
\begin{proof}
   Write now $M = d_1^{-1} W$ where $d_1 = \denom(M)$. Then, substitution of
this and $Y = \gamma P$ in standard form into equation~(\ref{eq:sys}) allows
us to rewrite the equation as
$$
  \frac{\tau(\gamma)}\gamma \tau(P) = \frac1{d_1} W P
  \qqtext{or}
  d_1 c A \tau(P) = B W P
$$
using $\tau(\gamma)/\gamma = cA/B$ and clearing denominators.
Now $B$ divides the \RHS\ of the equation. We assumed $A$ and $B$ are coprime, and so $B$ must divide the factor $d_1 \tau(P)$ in the \LHS.
But $P$ and hence $\tau(P)$ is primitive, and so $B \mid d_1$.

Write the inverse of $M$ as $M^{-1} = d_2^{-1} W'$ where $d_2 \in C[x]$
and $W' \in \Mat{C[x]}nn$. Then, multiplying equation~(\ref{eq:sys}) by $M^{-1}$ and substituting again
$Y = \gamma P$ in standard form, we obtain
$$
  \frac{1}{d_2} W' \tau(P)
  =
  \frac{B}{cA}  P
  \qqtext{or}
  cA W' \tau(P) = d_2 B P.
$$
As before, we can derive $A \mid d_2$.
\end{proof}

\begin{example}\label{example1}

Let $M = $
{\small
\begin{multline*}
\mbox{}\hspace{-8pt}\left(\begin{smallmatrix}
  0 & 0 & \tfrac{x+1}{x} & 0 & 0 & 0 \\
  0 & 0 & 0 & 0 & \tfrac{x+1}{x} & 0 \\
  0 & 0 & 0 & 0 & 0 & \tfrac{x+1}{x} \\
  \tfrac{(x+1)^2}{x(x+2)(x+3)} \hspace{-0.5cm} & 0 & \hspace{-0.5cm}-\tfrac{(x+1)(x^4+2x^3+x^2+x+4)}{x(x+2)^2(x+3)}\hspace{-0.5cm} & 0 & -\tfrac{x+1}{x(x+2)^2(x+3)} & 0 \\
  0 & \hspace{-0.2cm}\tfrac{(x+1)^2}{x(x+2)(x+3)}\hspace{-0.2cm} & \tfrac{(x+1)^2}{x(x+2)^2(x+3)} & 0 & 0 & -\tfrac{x+1}{x(x+2)^2(x+3)} \\
  0 & 0 & 0 &\hspace{-0.1cm} \tfrac{(x+1)^2}{x(x+2)(x+3)}\hspace{-0.1cm} & \hspace{-0.2cm}\tfrac{(x+1)^2}{x(x+2)^2(x+3)} \hspace{-0.2cm}& \tfrac{(x+1)(x^4+2x^3+x^2+x+4)}{x(x+2)^2(x+3)} \\
\end{smallmatrix}\right)
\end{multline*}
}
Then $\denom(M)=x(x+2)^2(x+3)$ and $\denom(M^{-1})=(x+1)^2(x+2)$ and so
$\mathcal{S}=\{x^{i_0}(x+1)^{i_1}(x+2)^{i_2}(x+3)^{i_3}:-1\leq i_0\leq 0, 0\leq i_1\leq 2,-2\leq i_2\leq 1, -1\leq i_3\leq 0\}$
whose cardinality is
$$2\cdot 3\cdot 4\cdot 2=48.$$
This number is larger than necessary; a provably complete search can be done with just 8 cases, see Example~\ref{example4} which addresses issues (a) and (b) stated in Section~\ref{sec:hypsolintro}.
\end{example}

\begin{remark}
Finding hypergeometric solutions of the system in Example~\ref{example1} is related to the factorization of the operator
$$(x+2)^2(x+3)\tau^4+\tau^3+(x^4+2x^3+x^2+x+4)\tau^2+(x+1)\tau+(x+1)(x+2).$$
It is explained in Section~\ref{sec:bronsteindifference} how to convert a factorization problem into solving a system for hypergeometric solutions. 
\end{remark}


\subsection{Step BP2: generalized exponents}\label{sec:hypsolgenexp}
The key of Step BP2 is the notion of \emph{generalized exponents} for systems.
For operators, a definition is given in \cite[Section 3.2]{YCHA}, for implementations see the references in Section~\ref{SectionNew}.
We will define generalized exponents of a system in a similar way. We start with some facts about the difference field $\mathbb{C}((x^{-1}))$. 

\subsubsection{The difference field $K$}
Denote $t=1/x$. Let $K=\mathbb{C}((t))$ and $K_r=\mathbb{C}((t^\frac{1}{r}))$ for $r=1,2,3,\ldots$. According to \cite{walkeralgcurves} the algebraic closure of $K$ is
$$\overline{K} = \bigcup_{r=1}^\infty K_r.$$
(This section uses $\mathbb{C}$ instead of $C$ because the above formula holds if the field of constants is algebraically closed.)
We will also denote $\overline{K}$ as $K_\infty$.  
The canonical $t$-adic valuation on $K$ extends to $K_r$ naturally; we denote it by $v:\Kbar\mapsto \mathbb{Q}\cup \{\infty\}$. For $v$ to be a valuation, it has to satisfy the following properties:
\begin{itemize}
    \item $v(a)=\infty$ if and only if $a=0$,
    \item $v(ab)=v(a)+v(b)$,
    \item $v(a+b)\geq \min\{v(a),v(b)\}$, with equality if $v(a)\neq v(b)$.
\end{itemize}
For a vector with entries in $\Kbar$, define its valuation to be the smallest valuation of the entries. The big O and little-o notations are used for elements in $\Kbar$ with respect to the valuation $v$. In particular, when we write $f=g+o(t^u)$ for $f,g\in\Kbar$, it means $v(f-g)>u$.

A general non-zero element in $K_r$ factors into
$ct^s(1+\sum_{i=1}^\infty a_it^\frac{i}{r}).$
Call $c$ its \emph{leading coefficient} and $ct^s$ the \emph{leading term}.

The action of $\tau$ on $K$ is given by
$$\tau(t)=\tau(\frac{1}{x})=\frac{1}{x+1}=\frac{t}{1+t}=t-t^2+t^3-\cdots,$$
which extends to $K_r$ and $\Kbar$ following
$$    \tau(t^\frac{1}{r})=t^\frac{1}{r}(1+t)^{-\frac{1}{r}}=t^\frac{1}{r}\left(1-\frac{1}{r}t+\frac{(-\frac{1}{r})(-\frac{1}{r}-1)}{2}t^2-\cdots\right).$$

We briefly describe the universal extension of $\Kbar$. For more details see \cite[Chapter 6]{van2003galois}. Denote by $\Kbar\{\hyp\}$ the algebra over $\Kbar$ generated by hypergeometric elements. Let $\tau$ act naturally on $\Kbar\{\hyp\}$. Then the polynomial ring $\Kbar\{\hyp\}[l]$ is a universal extension of $\Kbar$, equipped with a $\tau$-action following the rule
$$\tau(l)=l+t.$$
The valuation $v$ extends to $\Kbar[l]$ by setting $v(\sum_{i=0}^n a_il^i)=\min\{v(a_i):i=0,1,\ldots, n\}$.

\subsubsection{Generalized exponents}
This section will show how generalized exponents classify solutions of operators up to a factor of valuation 0. First consider the group 
\begin{equation}\label{eq:hypuptoval0}
    \{\hyp(f):f\in K_r^*\}/\{f\in K_r^*:v(f)=0\},
\end{equation}
which classifies hypergeometric elements over $K_r$ up to a factor of valuation 0. Applying the map $g\mapsto \frac{\tau(g)}{g}$, namely $\hyp(f)\mapsto f$, the group (\ref{eq:hypuptoval0})
is isomorphic to
$$\mathcal{G}_r:=K_r^*/K_{1,r},$$
where $K_{1,r}=\{\frac{\tau(f)}{f}: f\in K_r^*, v(f)=0\}$. 
\begin{lemma}\label{lemma:k1r}  $K_{1,r}=\{g\in K_r^*:v(g-1)>1\}$.
\end{lemma}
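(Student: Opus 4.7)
The plan is to prove the two inclusions separately, and the key computational ingredient in both directions is the action of $\tau - 1$ on a monomial: from $\tau(t^{k/r}) = t^{k/r}(1+t)^{-k/r}$ one reads off that $(\tau - 1)(t^{k/r})$ has leading term $-(k/r)\, t^{k/r + 1}$, which is nonzero precisely when $k \neq 0$.

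For the inclusion $K_{1,r} \subseteq \{g \in K_r^* : v(g-1) > 1\}$, I take $g = \tau(f)/f$ with $v(f) = 0$. Since a constant scalar factor of $f$ does not affect $\tau(f)/f$, I may assume $f = 1 + u$ with $v(u) \geq 1/r$. The monomial computation above extends to arbitrary $u$ of positive valuation: the lowest-valuation term of $u$ contributes the (nonzero) leading term of $\tau(u) - u$, so $v(\tau(u) - u) = v(u) + 1$. Then $g - 1 = (\tau(u) - u)/(1+u)$ and $v(1+u) = 0$ give $v(g-1) = v(u) + 1 \geq 1 + 1/r > 1$.

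For the reverse inclusion, I will use the formal $\exp$/$\log$ bijection between $\{x \in K_r : v(x) > 0\}$ and the multiplicative group $\{y \in K_r : v(y - 1) > 0\}$ (a standard fact about complete fields of characteristic zero, and one that commutes with $\tau$ because $\tau$ is a continuous ring homomorphism). Given $g$ with $v(g-1) > 1$, the series $\phi := \log g = \sum_{k \geq 1}(-1)^{k-1}(g-1)^k/k$ converges $t$-adically and satisfies $v(\phi) = v(g-1) > 1$. It then suffices to produce some $\psi \in K_r$ with $v(\psi) > 0$ and $\tau(\psi) - \psi = \phi$, for then $f := \exp(\psi)$ has $v(f) = 0$ and $\tau(f)/f = \exp(\tau(\psi) - \psi) = \exp(\phi) = g$. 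To construct such a $\psi$, I invert $\tau - 1$ termwise: if the current remainder has leading term $c\, t^{N/r}$ (necessarily $N/r > 1$, hence $N - r \geq 1$), I cancel it by adding $-\tfrac{r}{N - r}c\, t^{(N-r)/r}$ to $\psi$; the process converges in the $t$-adic topology and yields the desired $\psi$ with $v(\psi) \geq v(\phi) - 1 > 0$.

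The only delicate point I anticipate is checking that everything behaves well formally — in particular that $\tau\circ\exp = \exp\circ\tau$ on $\{v > 0\}$ and that $\log$ is a genuine inverse to $\exp$ on the 1-units — but these are routine facts about complete valued rings, so the computational heart of the proof is really the single identity $v(\tau(u) - u) = v(u) + 1$ for $v(u) > 0$.
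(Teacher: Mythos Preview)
Your argument is correct. For the forward inclusion $K_{1,r} \subseteq \{g : v(g-1)>1\}$ your computation is essentially the paper's, which records it in the compressed form $\tau(f)/f = 1 - v(f)\,t + o(t)$. For the reverse inclusion the paper gives no argument of its own, deferring to \cite[Lemma~3.2.4]{YCHA}; your proof via $\log$/$\exp$ and termwise inversion of $\tau - 1$ is a clean self-contained substitute. The identity $v((\tau-1)u) = v(u) + 1$ for $0 < v(u) < \infty$ is exactly what drives both halves: in the forward direction it gives $v(g-1) = v(u) + 1 \geq 1 + 1/r$, and in the reverse direction it guarantees that your iterative construction of $\psi$ converges (the remainder's valuation climbs by at least $1/r$ each step) while keeping $\psi$ in positive valuation (since $N/r > 1$ forces $N - r \geq 1$). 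The commutation $\tau \circ \exp = \exp \circ \tau$ that you flag is indeed routine, as $\tau$ is a valuation-preserving ring endomorphism and hence $t$-adically continuous.
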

\begin{proof}
A straight-forward calculation shows that
\begin{equation}
    \frac{\tau(f)}{f}=1-v(f)t+o(t).
\end{equation}
Obviously when $v(f)=0$ the right-hand side is $1+o(t)$. 
The converse follows from the proof given in \cite[Lemma 3.2.4]{YCHA}

\end{proof}
Thus, $f,g\in K_r^*$ represent the same class in $\mathcal{G}_r$ when
$v(\frac{f}{g}-1)>1,$
which happens if and only if
$v(f-g)>v(f)+1.$
For
$$f=ct^s(1+\sum_{i=1}^\infty a_it^\frac{i}{r})\in K_r^*,\quad \text{where } r<\infty,$$
denote
$$\trunc(f)=ct^s(1+\sum_{i=1}^r a_it^\frac{i}{r})\in K_r^*.$$
Clearly $\trunc:K_\infty^*\rightarrow K_\infty^*$ is well-defined. 
Then $v(f-\trunc(f))>v(f)+1$, which means the image of $f$ in $\mathcal{G}_r$ is represented by $\trunc(f)$. 
Hence $\mathcal{G}_r$ can be identified with the following set of representatives
$$ E_r :=\trunc(K_r^*) =\{ct^s(1+\sum_{i=1}^r a_it^\frac{i}{r}):c\in\mathbb{C}^*,s\in \frac{1}{r}\mathbb{Z},a_i\in\mathbb{C}\}.$$
$$ E_{\infty} := \trunc(K_{\infty}^*) = \bigcup_{r=1}^{\infty} E_r. $$
The following is a short exact sequence of abelian groups
$$1\rightarrow  K_{1,r}\rightarrow K_r^* \xrightarrow{\trunc} E_r\rightarrow 1,$$
where the set of representatives $E_r$ is turned into a group by
$$g_1\circ g_2 := \trunc(g_1g_2),\quad g_1,g_2\in {E}_r.$$

Denote 
\begin{equation}\label{eq:kbarlh}
    \Kbar[l]_{\rm hyp} =\{\hyp(f)p : f\in \Kbar, p\in \Kbar[l]\setminus\{0\}\}\subseteq \Kbar\{\hyp\}[l].
\end{equation}
It is a multiplicative monoid. An element $h\in\Kbar[l]_{\rm hyp}$ can be written uniquely in the form
$h=\hyp(e)p,$ where $e \in E_\infty$ and $p\in \Kbar[l]$ has valuation 0. Define
$\mathrm{gen}(h)= e$,
the \emph{generalized exponent} of $h$.
Note that $\mathrm{gen}$ commutes with multiplication. 

Generalized exponents of operators are defined in \cite{YCHA}. An order $n$ operator has exactly $n$ generalized exponents in $E$, counting with multiplicity. 
The relevant property for us is:

\begin{proposition}\label{prop:genexp}
\cite[Theorem 102]{Zhou}
For a non-zero operator $L\in \Kbar[\tau]$, $e\in E_\infty$ is a generalized exponent of $L$ if and only if there exists a solution $h\in \Kbar[l]_{\rm hyp}$ with $e=\mathrm{gen}(h)$.
\end{proposition}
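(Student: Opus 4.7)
My plan is to treat both directions by the operator-substitution $y = \hyp(e) z$, which satisfies the identity $L \cdot \hyp(e) = \hyp(e) \cdot L_e$ for a twisted operator $L_e \in \Kbar[\tau]$ obtained from $L$ by replacing $\tau$ with $e\tau$. The twist is compatible with the group structure on $E_\infty$: the generalized exponents of $L_e$ are exactly those of $L$ multiplied by $e^{-1}$, and solutions $h = \hyp(e) p \in \Kbar[l]_{\rm hyp}$ of $L$ correspond bijectively to valuation-$0$ solutions $p \in \Kbar[l]$ of $L_e$. Both implications therefore reduce to the base case $e = 1$: the trivial class $1 \in E_\infty$ is a generalized exponent of $L_e$ if and only if $L_e$ admits a nonzero valuation-$0$ solution in $\Kbar[l]$.

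For the direction $(\Rightarrow)$, I would invoke the structure theorem from \cite{YCHA}: every generalized exponent of $L$ arises from a first-order right factor of the form $\tau - \lambda$ (in an appropriate extension) with $\trunc(\lambda) = e$. Then $h := \hyp(\lambda)$ solves $\tau - \lambda$ and hence $L$, lies in $\Kbar[l]_{\rm hyp}$ (taking the polynomial factor to be $1$), and by construction satisfies $\mathrm{gen}(h) = \trunc(\lambda) = e$.

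For the direction $(\Leftarrow)$, after the reduction to $e = 1$, the task is to show that a nonzero valuation-$0$ solution $p \in \Kbar[l]$ of $L_e$ forces $1$ to be a generalized exponent of $L_e$. My plan is to expand $p = \sum_{j} p_j l^j$ with $p_j \in \Kbar$, look at the top-degree coefficient $p_d$, and reduce the equation $L_e(p) = 0$ modulo a sufficiently high power of $t$. Because $\tau(l) - l = t$ has strictly positive valuation, $l$ is $\tau$-invariant to leading order, so the reduction collapses to a scalar indicial equation in $p_d \in \Kbar$. The requirement that $v(p_d) = 0$ is then exactly the condition (in the sense of \cite{YCHA}) that $1 \in E_\infty$ is a generalized exponent of $L_e$. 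The main obstacle I anticipate is this last step: I have to verify carefully that the indicial equation produced by the $l$-top-coefficient/leading-term extraction is genuinely the same equation that defines generalized exponents in \cite{YCHA}, and that passing through the $l$-polynomial layer does not introduce a shift or degeneracy. If that matches, the induction on $\deg_l(p)$ propagates the conclusion cleanly.
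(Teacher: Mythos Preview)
The paper does not prove this proposition. It is prefaced by the sentence ``Generalized exponents of operators are defined in \cite{YCHA}. An order $n$ operator has exactly $n$ generalized exponents in $E$, counting with multiplicity. The relevant property for us is:'' and is followed immediately by the definition of generalized exponents for systems, with no proof environment in between. The result is imported from \cite{YCHA} as a black box, so there is no argument in the paper to compare yours against.

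Judged on its own, your twist reduction $L \mapsto L_e$ is correct and standard, and your $(\Rightarrow)$ direction is essentially a citation of how \cite{YCHA} produces generalized exponents from first-order right factors. The $(\Leftarrow)$ direction, however, has a real gap, and it appears one step earlier than where you locate it. Extracting the top $l$-coefficient of $p = \sum_{j \le d} p_j l^j$ does yield $L_e(p_d) = 0$ in $\Kbar$, because $\tau(l) - l = t$ only feeds into strictly lower $l$-degrees. But the hypothesis $v(p) = 0$ means $\min_j v(p_j) = 0$; it does \emph{not} force $v(p_d) = 0$, since the minimum may be attained at some $j < d$. If $\nu := v(p_d) > 0$, then the $\Kbar$-solution $p_d$ contributes the generalized exponent $\trunc(\tau(p_d)/p_d) = 1 - \nu t$, which is not $1$ in $E_\infty$, so you have not shown that $1$ is a generalized exponent of $L_e$. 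Repairing this requires either a more refined extraction that mixes $l$-degree with $t$-valuation, or passing to the right quotient by $\tau - \tau(p_d)/p_d$ and inducting on the order; either route drives you back into the structural machinery of \cite{YCHA}, which is presumably why the paper chooses to cite rather than reprove.
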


\begin{definition}\label{def:genexpsys} (Generalized exponents for systems).
Based on Proposition~\ref{prop:genexp}, we call $e\in E_{\infty}$ a \emph{generalized exponent} of the system $\tau(Y)=MY$ if this system has a solution of the form $\hyp(e)S$ where $S\in \Kbar[l]^n$ and $v(S)=0$.

\end{definition}
\begin{remark}[Algorithms]
For operators we can quickly compute the generalized exponents with the program GeneralizedExponents in the LREtools package in Maple 2021. For systems over $\mathbb{C}(x)\subseteq K$,
Section~\ref{sec:local} gives an algorithm for computing the {\em unramified}\, generalized exponents
(i.e. those in $\mathcal{G}_1$). 

\end{remark}
\begin{lemma}\label{rem:genexp2}
Denote $e=\trunc(f)$ for $f\in \Kbar^*$.
Due to Lemma~\ref{lemma:k1r}, $\hyp(\frac{f}{e})$ lies in $\Kbar$ and has valuation $0$. Hence if $\hyp(f)S$ is a hypergeometric solution of a system where $v(S)=0$, then $e$ is a generalized exponent.
\end{lemma}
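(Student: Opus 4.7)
The plan is to use the multiplicativity property (i) to split $\hyp(f)=\hyp(e)\hyp(f/e)$ (up to a constant), and then show that the second factor $\hyp(f/e)$ is a valuation-zero element of $\Kbar$. Once that is established, $\hyp(f/e)$ can be absorbed into the vector $S$ without affecting the hypothesis $v(S)=0$, producing a solution of the form required by \autoref{def:genexpsys}.

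For the first statement, I would fix $r$ with $f\in K_r^*$ (possible since $\Kbar=\bigcup_r K_r$). Writing $f=ct^s(1+\sum_{i\ge 1}a_it^{i/r})$, the truncation $e=\trunc(f)$ drops all terms of index $i>r$, so $v(f-e)\ge v(e)+1+\tfrac{1}{r}>v(e)+1$, and hence
$$v(f/e-1)=v(f-e)-v(e)>1.$$
By \autoref{lemma:k1r}, $f/e\in K_{1,r}$, meaning $f/e=\tau(g)/g$ for some $g\in K_r^*$ with $v(g)=0$. Property (ii) then gives $\hyp(f/e)=g$ up to a constant, so $\hyp(f/e)\in K_r\subseteq\Kbar$ and has valuation $0$.

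For the second statement, suppose $\hyp(f)S$ is a hypergeometric solution with $v(S)=0$. By property (i),
$$\hyp(f)\,S \;=\; \kappa\cdot \hyp(e)\,\hyp(f/e)\,S$$
for some nonzero constant $\kappa$. Setting $S':=\kappa\,\hyp(f/e)\,S$, the first part shows $\hyp(f/e)\in\Kbar$ with valuation $0$, hence $S'\in\Kbar^n\subseteq\Kbar[l]^n$ and $v(S')=v(\hyp(f/e))+v(S)=0$. Since $\hyp(e)\,S'=\hyp(f)\,S$ is a solution of the system, \autoref{def:genexpsys} directly concludes that $e$ is a generalized exponent.

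There is no real obstacle in the argument; it is essentially a compatibility check between the choice of representative $\trunc(f)$ in $E_\infty$ and the definition of a generalized exponent of a system. The only two minor bookkeeping points are (a) locating an $r$ with $f\in K_r^*$ so that \autoref{lemma:k1r} applies, and (b) handling the "up to a constant" ambiguity in properties (i) and (ii), both of which are harmless because the defining property of a generalized exponent is invariant under multiplication by a nonzero scalar of valuation $0$.
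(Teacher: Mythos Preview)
Your proposal is correct and follows precisely the route the paper indicates. The paper gives no separate proof for this lemma; the argument is embedded in the statement itself (the phrases ``Due to \autoref{lemma:k1r}'' and ``Hence''), and your write-up simply unpacks those two sentences with the expected details: locating $r$ so that $f\in K_r^*$, invoking $v(f-\trunc(f))>v(f)+1$ to place $f/e$ in $K_{1,r}$, and then absorbing the valuation-zero factor $\hyp(f/e)$ into $S$ to match \autoref{def:genexpsys}.
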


Now we discuss the relation between hypergeometric solutions of systems over $\mathbb{C}(x)$ and their generalized exponents.
Since $\mathbb{C}(x)\subseteq K$, $\trunc(f)\in\mathcal{G}_1$ for $f\in \mathbb{C}(x)^*$. For a nonzero monic Laurent series
$$f=t^n+dt^{n+1}+\cdots \in K^*,$$
denote $\slc(f)=d$, where $\slc$ stands for \emph{second leading coefficient}. 

For a monic polynomial 
$$A=x^n+dx^{n-1}+\cdots,$$
we have
$$\trunc(A)=x^n+dx^{n-1}=t^{-\deg(A)}(1+\slc(A)t),$$
and for a non-zero rational function $c\frac{A}{B}$ where $A,B$ are monic polynomials, straight-forward computations show that
$$\trunc(c\frac{A}{B})=c t^{\deg(B)-\deg(A)}\left(1+(\slc(A)-\slc(B))t\right).$$

The following lemma justifies the definition of $\mathcal{H}$ in Step BP2 (Section~\ref{sec:algv1}).

\begin{lemma}
Consider equation~(\ref{eq:sys}).  Suppose $A,B\in C[x]$ are monic. If there is a solution in the form $\hyp(cA/B)P$ where $P\in C(x)^n$ then the system has a generalized exponent $ct^s(1+dt)\in E_1$ that satisfies
\begin{equation}\label{eq:compatibility1}
\deg(B)-\deg(A)=s,\quad \slc(A)-\slc(B)-d=v(P)\in\mathbb{Z}.  
\end{equation}
If we further require $P\in C[x]^n$ then the relations have a stronger form
    \begin{equation}\label{eq:compatibility2}
\deg(B)-\deg(A)=s,\quad -\slc(A)+\slc(B)+d=-v(P)=\deg(P)\in\mathbb{N}.
\end{equation}
We will refer to \eqref{eq:compatibility1} as the \emph{weak compatibility relations} and \eqref{eq:compatibility2} the \emph{strong compatibility relations}.
\end{lemma}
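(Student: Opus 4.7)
The plan is to reduce the statement to Lemma \ref{rem:genexp2} by rewriting $\hyp(cA/B)P$ as $\hyp(f)S$ for some $f$ with the desired truncation and some $S$ of valuation $0$.

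First I would use the identity $g\hyp(f)=\hyp(f\,\tau(g)/g)$ (up to a constant) with $g=t^{v(P)}$. Since $P\in C(x)^n\subseteq K^n$, the valuation $v(P)$ lies in $\Z$, so $t^{v(P)}$ is a genuine element of $\Kbar$. Setting $S := P/t^{v(P)}\in\Kbar^n$ gives $v(S)=0$, and
$$\hyp(c\tfrac{A}{B})P \;=\; \hyp\!\Bigl(c\tfrac{A}{B}\cdot \tfrac{\tau(t^{v(P)})}{t^{v(P)}}\Bigr)\,S.$$
By Lemma \ref{rem:genexp2}, the truncation of the inner argument is a generalized exponent of the system. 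So the task reduces to computing that truncation.

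Next, I would expand both factors to order $o(t)$. The formula recorded just before the lemma gives
$$c\tfrac{A}{B} \;=\; c\,t^{\deg(B)-\deg(A)}\bigl(1+(\slc(A)-\slc(B))t+o(t)\bigr),$$
and a direct computation (using $\tau(t)=t/(1+t)$) yields
$$\tfrac{\tau(t^{v(P)})}{t^{v(P)}} \;=\; (1+t)^{-v(P)} \;=\; 1 - v(P)\,t + o(t).$$
Multiplying these expansions, the leading term is $ct^{\deg(B)-\deg(A)}$ and the next coefficient (after dividing by the leading term) is $\slc(A)-\slc(B)-v(P)$. Applying $\trunc$ and comparing with $ct^s(1+dt)$ gives the two weak compatibility relations
$$s=\deg(B)-\deg(A),\qquad d=\slc(A)-\slc(B)-v(P),$$
which is exactly \eqref{eq:compatibility1}.

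Finally, for the strong version, if $P\in C[x]^n$ is a nonzero polynomial vector then its entries have non-positive $t$-adic valuation, so $v(P)=-\deg(P)\in-\N$ where $\deg(P)$ denotes the maximum degree among the entries. Substituting $v(P)=-\deg(P)$ into the weak relations and flipping signs yields \eqref{eq:compatibility2}. The only place one needs to be mildly careful is in keeping track of the leading versus second coefficients when multiplying the two Laurent expansions; that bookkeeping is the main (and only) obstacle, but it is routine.
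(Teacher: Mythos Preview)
Your proof is correct and follows essentially the same approach as the paper: rewrite $\hyp(cA/B)P$ as $\hyp(cA/B\cdot(1+t)^{-v(P)})\,t^{-v(P)}P$, apply Lemma~\ref{rem:genexp2} to read off the generalized exponent as the truncation, and then expand to identify $s$ and $d$. The paper's argument is slightly terser but the ideas and computations are identical.
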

\begin{proof}
Note that
$$\hyp\left(c\frac{A}{B}\right)P   
=\hyp\left(c\frac{A}{B}(1+t)^{-v(P)}\right)\,t^{-v(P)}P,$$
where $v(t^{-v(P)}P)=0$. By Remark~\ref{rem:genexp2} the existence of such a solution implies 
$$\trunc(c\frac{A}{B})\circ \trunc((1+t)^{-v(P)})=ct^{\deg(B)-\deg(A)}(1+(\slc(A)-\slc(B)-v(P))t)$$
is a generalized exponent in $E_1$. When $P\in C[x]^n$, $-v(P)=\deg(P)\in \mathbb{N}$.
\end{proof}

\begin{example}\label{example2}
Consider the same system as in Example~\ref{example1}. There are two unramified generalized exponents:
$$t^{2}(1+2t),\quad t^{-1}(1-4t).$$
Therefore $c=1$. Eight $A/B$'s in $\mathcal{S}(M)$ from Step 1 (Example~\ref{example1}) match $t^2(1+2t)$; none matches the other generalized exponent.
\end{example}

\subsection{Step BP3} 
In this step, for each potential $\lambda = c A/B$,  compute polynomial solutions of
$\lambda \tau(P) = M P$ using
the algorithm given in \cite{AB98}.

\begin{example}[Continued from Example~\ref{example2}]\label{example3}
Our next step is to find all polynomial solutions of the system
\begin{equation}\label{eq:polysol}
   \tau(P)=(c\frac{A}{B})^{-1}MP 
\end{equation}
for each $c\frac{A}{B}$. The set of hypergeometric solutions is one dimensional, given by
$$\hyp\left(\frac{x+1}{x(x+2)(x+3)}\right)
\begin{pmatrix}
  (x+1)^3  (x+2) x^2\\ (x+2) (x+1) (-x-1)\\ (x+2)^2 (x+1)^2\\ -x^4-4 x^3-3 x^2+1\\ -x-2\\ (x+2) (x+3)
\end{pmatrix}.
$$
Algorithm Version I computes this solution four times; different $c\frac{A}{B}$'s can lead to the same solution.
If $\hyp(\lambda)P$ is a solution where $P\in C[x]^n$ and $\hyp(\lambda')=f\hyp(\lambda)$ for some $f\in C[x]$, then Version I may
rediscover $\hyp(\lambda)P$ by computing $\hyp(\lambda')fP$.
\end{example}

Computing too many candidates and duplicate solutions -- issues  (a),(b) in Section~\ref{sec:hypsolintro} -- will be addressed in Version II.

\section{Algorithm Version II} 
\subsection{Type and local types} 
\begin{definition}[Type]
  For $f_1, f_2\in C(x)^*$, say $f_1$ and $f_2$ have the same \emph{type} if $\hyp(f_1){C}(x)=\hyp(f_2){C}(x)$.
\end{definition}
Duplicates can occur in Algorithm Version I because the set $\mathcal{S}$ can contain different $\frac{A}{B}$'s with the same type. If we select one single $\frac{A}{B}$ for each type, then repeated solutions are avoided. This can be done by bounding \emph{local types}. 

Notation: For a prime polynomial $p\in C[x]$, denote
  $[p]:=\{\tau^i(p): i\in\mathbb{Z}\}$, the set of  all polynomials that are {\em shift equivalent} to $p$.

It is easy to detect if two monic irreducible polynomials are shift equivalent, because the $i$ in $f = \tau^i(g)$ can be computed by dividing $\mathrm{slc}(f)-\mathrm{slc}(g)$ by the degree,
and if $i$ is an integer, then one checks if $f$ is actually equal to $\tau^i(g)$.

For a prime polynomial $p \in C[x]$, let $v_p:C(x)\rightarrow \Z\cup\{\infty\}$ be the $p$-valuation.
\begin{definition}[Local Type]\label{def:localtype}
  For a non-zero element $a \in C(x)\setminus\{0\}$ and a prime polynomial
  $p \in C[x]$, let
  $$
    g_p(a) = \sum_{k \in \Z} v_{\tau^k(p)}(a).
  $$
  In other words, $g_p$ is the sum of valuations with respect to all prime polynomials that are shift equivalent to $p$. 
  We call $g_p(a)$ the \emph{local
    type} of $a$ at $[p]$.
\end{definition}

The following theorem shows the relation between types and local types.

\begin{theorem}[{\cite[Theorem 1]{VANHOEIJ1999109}}]\label{thm:type}
  Suppose $\lambda_1 = c_1\frac{A_1}{B_1}, \ \lambda_2 = c_2\frac{A_2}{B_2}\in C(x)^*$ where $A_1,B_1,A_2,B_2\in C[x]$ are monic and $c_1,c_2\in C$. Then $\lambda_1$ and $\lambda_2$
  have the same type if and only if
  \begin{itemize}
      \item $c_1=c_2$, and
      \item $g_p(\lambda_1)=g_p(\lambda_2)$ for any prime $p\in C[x]$.
  \end{itemize}
\end{theorem}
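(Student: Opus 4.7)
My plan is to reduce the theorem to characterising the image of the group homomorphism $\phi\colon C(x)^*\to C(x)^*$, $f\mapsto \tau(f)/f$. Having the same type means $\hyp(\lambda_1)=g\,\hyp(\lambda_2)$ for some $g\in C(x)^*$, and applying $y\mapsto\tau(y)/y$ to both sides shows this is equivalent to $\lambda_1/\lambda_2=\tau(g)/g\in\mathrm{im}(\phi)$. Writing $\lambda=cA/B$ with $A,B$ monic and coprime gives a well-defined map $\lambda\mapsto c$ which is a group homomorphism $C(x)^*\to C^*$; likewise each $g_p$ is a homomorphism $C(x)^*\to\Z$. So the two bulleted conditions on $(\lambda_1,\lambda_2)$ are equivalent to saying that $\lambda_1/\lambda_2$ has leading constant $1$ and $g_p(\lambda_1/\lambda_2)=0$ for every prime $p$. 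Thus the theorem reduces to proving that $\mathrm{im}(\phi)$ is exactly the set of such $\lambda$.

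For the necessity direction I would write $f\in C(x)^*$ uniquely as $f=u\prod_p p^{e_p}$, with $u\in C^*$, $p$ ranging over monic irreducibles of $C[x]$, and $(e_p)$ finitely supported. Because $\tau$ permutes monic irreducibles, $\phi(f)=\prod_p\tau(p)^{e_p}p^{-e_p}$ is manifestly a quotient of monic polynomials, so its leading constant is $1$. For the local types, $v_q(\phi(f))=v_{\tau^{-1}(q)}(f)-v_q(f)=e_{\tau^{-1}(q)}-e_q$, so summing over a shift class telescopes to give $g_p(\phi(f))=0$.

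The sufficiency direction is the main obstacle. Given $\lambda\in C(x)^*$ with leading constant $1$ and every $g_p(\lambda)=0$, I would construct $f$ with $\phi(f)=\lambda$ one shift class at a time. Fix a class $[p_0]$ and set $p_k=\tau^k(p_0)$; the required local relation $v_{p_k}(\phi(f))=v_{p_k}(\lambda)$ then reads $e_{p_{k-1}}-e_{p_k}=v_{p_k}(\lambda)$, which is solved by
$$e_{p_k}=-\sum_{j\leq k}v_{p_j}(\lambda).$$
The delicate point is finite support of $(e_{p_k})$ within the class: for $k$ below the (finite) support of the sequence $(v_{p_j}(\lambda))_j$ the defining sum is empty, and for $k$ above it it equals $-g_{p_0}(\lambda)=0$ by hypothesis. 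Since only finitely many shift classes meet the support of $\lambda$, taking $f=\prod_p p^{e_p}$ gives an element of $C(x)^*$ with $\phi(f)=\lambda$; no constant factor is needed because both $\lambda$ and $\phi(f)$ have leading constant $1$. The hypothesis $g_p(\lambda)=0$ enters precisely here --- it is what forces the partial sums to close up so that $f$ is a rational function rather than an infinite product.
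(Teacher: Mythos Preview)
The paper does not supply its own proof of this theorem; it is quoted verbatim from \cite{VANHOEIJ1999109} and used as a black box. So there is nothing in the present paper to compare your argument against.

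That said, your argument is correct and is essentially the standard one. The reduction to characterising $\mathrm{im}(\phi)$ with $\phi(f)=\tau(f)/f$ is exactly right: the definition of type gives $\hyp(\lambda_1)=g\,\hyp(\lambda_2)$ for some $g\in C(x)^*$, and applying $y\mapsto\tau(y)/y$ yields $\lambda_1/\lambda_2=\phi(g)$; conversely $\phi(g)=\lambda_1/\lambda_2$ forces $g\,\hyp(\lambda_2)$ to satisfy $\tau(y)=\lambda_1 y$, hence to equal $\hyp(\lambda_1)$ up to a constant. The leading-coefficient map and each $g_p$ are group homomorphisms $C(x)^*\to C^*$ and $C(x)^*\to\Z$, so the two bulleted conditions do pass to the quotient $\lambda_1/\lambda_2$ as you claim. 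The necessity direction is a clean telescoping computation. For sufficiency, your explicit antiderivative $e_{p_k}=-\sum_{j\le k}v_{p_j}(\lambda)$ within each shift class is correct, and you have identified precisely where the hypothesis $g_{p_0}(\lambda)=0$ is used: it is what makes the partial sums vanish for $k$ beyond the support, so that $f=\prod_p p^{e_p}$ is a genuine element of $C(x)^*$ rather than a formal infinite product. One tiny remark: in the line ``$\lambda=cA/B$ with $A,B$ monic and coprime'', coprimality is not needed for $\lambda\mapsto c$ to be well defined---the leading coefficient of $\lambda$ as a Laurent series in $x^{-1}$ already singles out $c$.
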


Let now $Y = \hyp(cA/B) P$ be once more a hypergeometric solution to
equation~(\ref{eq:sys}) where $c\in C^*$ and $A,B\in C[x]$ are monic.  
Then Theorem~\ref{thm:AB} yields $A \mid \denom(M^{-1})$ and
$B \mid \denom(M)$. The first statement implies
$$
  0 \leq g_p(A) \leq g_p(\denom(M^{-1}))
$$
for every prime polynomial $p \in C[x]$ while the second statement yields
$$
  0 \leq g_p(B) \leq g_p(\denom(M)).
$$
Now we restate Theorem~\ref{thm:AB} in terms of local types.

\begin{corollary}\label{lem:gp}
  For a hypergeometric solution $Y = \hyp(\lambda) P$ of equation~(\ref{eq:sys}), we have
  $$
    -g_p(\denom(M))
    \leq g_p(\lambda) 
    \leq g_p(\denom(M^{-1}))
  $$
  for every prime $p \in C[x]$.
\end{corollary}

If the number of combinations in the algorithm below is high, then it can be worthwhile to spend more CPU time
to obtain a sharper bound:
  \begin{equation} \label{eqDen}
    -g_p(\denom(\tau(M) M))
    \leq 2 g_p(\lambda) 
    \leq g_p(\denom( (\tau(M) M)^{-1})).
  \end{equation}
To see why this holds, note that $Y$ is also a solution of $\tau^2(Y) = \tau(MY) = \tau(M) \tau(Y) = \tau(M) M Y$, and
$\tau^2({\rm hyp}(\lambda)) = \tau(\lambda)\lambda {\rm hyp}(\lambda)$,  and $g_p( \tau(\lambda)\lambda ) = 2 g_p(\lambda)$. One can derive even sharper bounds
from denominators of even larger products ($\tau^2(M) \tau(M) M$, etc.) for diminishing returns.

\begin{remark}[Denominator bounds] \label{remDen}
Let $M_1:=M$,  $M_{i+1} := \tau^i(M) M_i$, in particular $M_2 = \tau(M) M$.
These matrices can be used not only to bound $g_p(\lambda)$, like in equation (7),
they can also be used to give a family of denominator-bounds
for rational solutions of systems, as explained in \cite{denomBound}. 
Traditionally, a denominator bound for rational solutions is computed
from ${\rm denom}(M)$ and ${\rm denom}(M^{-1})$.
But we can obtain sharper denominator bounds if we compute
${\rm denom}(M_i$) and ${\rm denom}(M_i^{-1})$ for larger $i$.
The traditional bound, $i=1$, costs the least to compute. The trade-off is that if we take say $i=2$,
we get a sharper bound, but spend more CPU time to compute it, as it involves
computing $M_2$ and $M_2^{-1}$, which could be large matrices.
\end{remark}

\subsection{Algorithm Version II}

Theorem~\ref{thm:type} and Corollary~\ref{lem:gp} lead to a more efficient algorithm for computing hypergeometric solutions. 

\begin{algorithm}[Version II]~\\
Input: $\tau(Y)=MY$, where $M\in \GL_n(C(x))$ \\
Output: hypergeometric solutions. 
\begin{itemize}[leftmargin=1cm]
    \item[BP1] 
    Factor $\denom(M)$ and $\denom(M^{-1})$. Say
    $$\denom(M)=a_1\prod_{i=1}^m p_i^{e_i}, \quad \denom(M^{-1})=a_2\prod_{i=1}^m p_i^{e_i},$$
    where $a_1,a_2\in C-\{0\}$ and $p_i$ are monic irreducible polynomials. \\
    After reordering $p_1,\ldots,p_m$ we can suppose $[p_1],[p_2],\ldots, [p_l]$ are all the shift equivalence classes. \\
    For $i=1,2,\ldots,l$, calculate $g_{p_i}(\denom(M))$ and $g_{p_i}(\denom(M^{-1}))$. \\
    $\mathcal{S}_2:=\{\prod_{i=1}^l p_i^{f_i}: -g_{p_i}(\denom(M))\leq f_i\leq g_{p_i}(\denom(M^{-1}))\}$.
    \item[BP2] 
        Use the algorithm introduced in subsection \ref{sec:d} to compute 
        $$G:=\{\text{unramified generalized exponents of the system}\}.$$
        Let $\mathcal{H}_2:=$
        $$\{c\frac{A}{B}:cx^s(1+dx^{-1})\in G,\frac{A}{B}\in \mathcal{S}_2, \deg(A)-\deg(B)=s,d-\slc(\frac{A}{B})\in \mathbb{Z}\}.$$
    \item[BP3] 
        Let $\mathrm{Sols}=\emptyset$. \\
        For each $c\frac{A}{B}\in \mathcal{H}_2$, solve the system $\tau(P)= c^{-1}\frac{B}{A}MP$ for (a basis of) \emph{rational} solutions (e.g. \cite{AB98, abramov1999}). Add $\hyp(c\frac{A}{B})P$ to $\mathrm{Sols}$ for any rational solution $P$. \\
        Return $\mathrm{Sols}$ as the output.
\end{itemize}  
\end{algorithm}

\begin{remark}\label{rem:notstandardform}
Steps BP2 and Step BP3 are slightly different from the same steps in Version I. The reason is, if $\hyp(cA/B)P$ is a hypergeometric solution in standard form, then Version II only
computes the type of $c A/B$, which determines $\hyp(cA/B)$ up to a factor in $\Ct(x)$.  As a result,
$P$ is no longer guaranteed to have polynomial entries. Hence in Step BP2 the weak compatibility relations \eqref{eq:compatibility1} are applied instead of the strong and in Step BP3 we compute rational solutions.
\end{remark}

\begin{example}\label{example4}
Let $M$ be the same matrix as in Example~\ref{example1}. 
Every irreducible factor of $\denom(M)=x(x+2)^2(x+3)$ and $\denom(M^{-1})=(x+1)^2(x+2)$ is shift equivalent to $x$, and
$$g_x(\denom(M))=4,\quad g_x(\denom(M^{-1}))=3. $$
Therefore,
$$\mathcal{S}_2(M)=\{x^i:-4\leq i\leq 3\}.$$
The cardinality of $\mathcal{S}_2(M)$ is 8, a significant improvement over Version I (Example~\ref{example1}).
In Example~\ref{example2} we calculated the unramified generalized exponents:
$$t^{2}(1+2t),\quad t^{-1}(1-4t).$$
The matching elements in $\mathcal{S}_2(M)$ are $x^{-2}, x$ and so $\mathcal{H}_2(M)=\{x^{-2},x\}$.
\end{example}

Version II has two advantages over Version I:
\begin{itemize}
    \item More efficient: there are fewer candidates for $\frac{A}{B}$.
    \item No duplicate solutions.
\end{itemize}
as well as one drawback: we now have to compute rational solutions $P$ instead of polynomial solutions (Remark~\ref{rem:notstandardform}).
But this drawback is easily overcome, the next section shows how one can ensure that $P$ will be in $C[x]^n$. 

\section{Algorithm Version III}

Algorithm Version II greatly reduced the number of $c\frac{A}{B}$'s by avoiding repeated types. The drawback in Version II is that when implemented as stated, it involves
computing {\em rational} instead of {\em polynomial} solutions for each $c\frac{A}{B}$ in Step~BP3. We address this drawback in Version III, which is very similar,
except that each $A/B$ is chosen using Theorem~\ref{thm:polsol} below, which ensures that it suffices to only compute polynomial solutions in Step~BP3.
A few notations are needed before stating the main result of this section.

 %
 Suppose $r_1,r_2\in C(x)$ are of the same type. Say $r_1 \preccurlyeq r_2$ if $\frac{\hyp(r_2)}{\hyp(r_1)}\in C[x]$. It is easy to see that $\preccurlyeq$ is a partial order (on the set of all non-zero rational functions or those that are of the same type). Theorem~\ref{thm:AB} states that every hypergeometric solution can be written as $\hyp(c\frac{A}{B})P$ where $P\in C[x]^n$. If $\hyp(c'\frac{A'}{B'})P'=\hyp(c\frac{A}{B})P$ and $c'\frac{A'}{B'}\preccurlyeq c\frac{A}{B}$, then a consequence is $P'\in C[x]^n$. Thus, to achieve the goal, we want $\frac{A}{B}$ to be as small as possible.
 
 Recall that
 $$\mathcal{S}(M)=\{\frac{A}{B}:A,B\in C[x] \text{ monic, }A\mid \denom(M^{-1}), B\mid \denom(M)\}$$
contains all potential $\frac{A}{B}$ in Algorithm Version I (Section~\ref{sec:algv1}). Consider the partition $\mathcal{S}(M)=\bigcup_i \mathcal{T}_i(M)$ of $\mathcal{S}(M)$, where each $\mathcal{T}_i(M)$ consists of all elements of a particular type. 
 \begin{theorem} \label{thm:polsol}
 Suppose $s_i\in \mathcal{T}_i(M)$ has the smallest slc in $\mathcal{T}_i(M)$. 
 Then $s_i$ is the $\preccurlyeq$-smallest element in $\mathcal{T}_i(M)$.
 \end{theorem}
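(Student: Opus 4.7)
The plan is to translate both the relation $\preccurlyeq$ and the property of minimizing $\slc$ into purely local data at each shift class of primes, and then match them up. Throughout, any $r \in \mathcal{T}_i(M)$ may be written $r = cA/B$ with the same constant $c$ (by \autoref{thm:type}), so quotients $r/s$ of two elements of $\mathcal{T}_i(M)$ have Laurent-valuation $0$, and $\slc$ is additive: $\slc(cA/B) = \slc(A) - \slc(B)$.

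First, I would reformulate $s \preccurlyeq r$ as a partial-sum inequality. The element $g := \hyp(r)/\hyp(s) \in C(x)^*$ satisfies $\tau(g)/g = r/s$. Fix a shift class $[p]$, write $p_k = \tau^k(p)$, and set $e_k = v_{p_k}(r/s)$ and $f_k = v_{p_k}(g)$. The functional equation gives $f_{k-1} - f_k = e_k$ for every $k$; together with $\sum_k e_k = 0$ (from $g_p(r) = g_p(s)$, by \autoref{thm:type}) and the fact that $f_k$ has finite support, this yields the closed form $f_k = -\sum_{j \leq k} e_j$. Hence $s \preccurlyeq r$ is equivalent to $\sum_{j \leq k} v_{p_j}(r/s) \leq 0$ for every shift class $[p]$ and every $k \in \Z$.

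Second, I would identify the minimum-$\slc$ element locally. Using $\slc(\tau^k(p)) = k\deg(p) + \slc(p)$ together with additivity of $\slc$ over products of monic polynomials, the contribution of $[p]$ to $\slc(A) - \slc(B)$ equals $\deg(p) \sum_k k\, e_k + \slc(p) \sum_k e_k$, where now $e_k = v_{p_k}(A) - v_{p_k}(B)$. The second summand is constant on $\mathcal{T}_i(M)$, so minimizing the total $\slc$ reduces, shift class by shift class, to minimizing $\sum_k k\, e_k$ subject to $-B_k \leq e_k \leq A_k$ (with $A_k := v_{p_k}(\denom(M^{-1}))$ and $B_k := v_{p_k}(\denom(M))$) and $\sum_k e_k$ fixed. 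A standard exchange argument --- for $k < k'$, transferring one unit of mass from $e_{k'}$ down to $e_k$ strictly decreases $\sum k\, e_k$ --- forces the optimum to be \emph{greedy}: there is a threshold $k^\star$ such that $e_k = A_k$ for $k < k^\star$, $e_k = -B_k$ for $k > k^\star$, and $e_{k^\star}$ is pinned down by the total-sum constraint.

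Third, I would verify that this greedy sequence $e^{(1)}$ (coming from $s_i$) satisfies $\sum_{j \leq k} e_j^{(2)} \leq \sum_{j \leq k} e_j^{(1)}$ for every competing sequence $e^{(2)}$ arising from an arbitrary $r \in \mathcal{T}_i(M)$. For $k < k^\star$ the greedy partial sum equals $\sum_{j \leq k} A_j$, which is the maximum of the left-hand side since $e_j^{(2)} \leq A_j$. For $k \geq k^\star$ one uses conservation $\sum_{j \leq k} e_j = G - \sum_{j > k} e_j$ with $G := \sum_j e_j$ fixed by type, and observes that the greedy tail $-\sum_{j > k} B_j$ is the minimum possible value of $\sum_{j > k} e_j^{(2)}$ because $e_j^{(2)} \geq -B_j$; the desired inequality follows. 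Combined with step one, this proves $s_i \preccurlyeq r$. The main subtlety to watch is the coprimality constraint $v_{p_k}(A)\, v_{p_k}(B) = 0$: writing $e_k = a_k - b_k$ with $a_k, b_k \geq 0$, the coprime choice $a_k = \max(e_k, 0)$, $b_k = \max(-e_k, 0)$ automatically respects $a_k \leq A_k$ and $b_k \leq B_k$ whenever $e_k \in [-B_k, A_k]$, so the exchange argument goes through without interference.
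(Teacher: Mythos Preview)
Your argument is correct and takes a genuinely different route from the paper's.

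The paper proceeds by contradiction in essentially one stroke: assuming $s_i$ is not $\preccurlyeq$-smallest, it picks a prime $p$ appearing in the denominator of $\hyp(r/s_i)$ for some witness $r\in\mathcal{T}_i(M)$, and asserts that $s_i\cdot p/\tau(p)$ again lies in $\mathcal{T}_i(M)$ while its $\slc$ has dropped by $\deg p$, contradicting minimality. Your approach is instead direct and structural: you recast $s\preccurlyeq r$ as partial-sum dominance of the valuation sequences along each shift class, identify the $\slc$-minimizer with the greedy (threshold) allocation via an exchange argument on the linear program ``minimize $\sum_k k\,e_k$ subject to $e_k\in[-B_k,A_k]$ and $\sum_k e_k$ fixed'', and then check that the greedy sequence dominates every feasible competitor in all partial sums. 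The paper's argument buys brevity. Yours buys an explicit description of the minimizer (hence immediate uniqueness) and a self-contained verification that each move stays inside $\mathcal{S}(M)$: your exchange step only shifts mass between indices where the box constraints are slack, whereas the paper's bare assertion that $s_i\,p/\tau(p)\in\mathcal{T}_i(M)$ leaves the divisibility conditions $A\mid\denom(M^{-1})$, $B\mid\denom(M)$ unchecked. In effect, your partial-sum reformulation is the machinery that would also be needed to justify (or repair) that step in the paper's argument.
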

   \begin{proof}
All elements in $\mathcal{T}_i(M)$ have the same local type, so the local types of $\mathcal{T}_i(M)$ is well-defined.
Suppose $\mathcal{T}_i(M)$ has a non-zero local type $e_1,e_2,\ldots, e_N$ at $[p_1],[p_2],\ldots, [p_N]$, respectively,
where $p_1,p_2,\ldots,p_N$ are mutually shift non-equivalent.
We first consider the case $N=1$.
 
 Assume $s_i$ is not the $\preccurlyeq$-smallest element in $\mathcal{T}_i(M)$.
Then there exists $r\in \mathcal{T}_i(M)$ such that $\hyp(\frac{r}{s_i})\notin C[x]$. Therefore there exists an irreducible polynomial $p\in C[x]$ such that
 $$\hyp(\frac{r}{s_i})=\frac{f}{pg},$$
 where $f,g\in C[x]$ and $\gcd(f,pg)=1$. Then $s_i\frac{p}{\tau(p)}\in \mathcal{T}_i(M)$ 
 and $$\mathrm{slc}(s_i\frac{p}{\tau(p)})=\mathrm{slc}(s_i)+\mathrm{slc}(p)-\mathrm{slc}(\tau(p))=\mathrm{slc}(s_i)-\deg(p),$$
 which contradicts to the assumption that $s_i$ has the smallest $\mathrm{slc}$ among elements in $\mathcal{T}_i(M)$.
 
The $\preccurlyeq$-smallest element is unique because the ordering is anti-symmetric.
The proof also works for $N>1$, treating each shift-equivalent class separately.
 \end{proof}
 \begin{remark} \label{degbound}
If $ct^s(1+dt)$ is the unramified generalized exponent with the largest $d$ that is compatible with $\hyp(c\frac{A}{B})P$ where $P\in C[x]^n$, then $d-\slc(A/B)$ is a degree bound for $P$ due to the relation $d=\slc(A/B)+\deg(P)$. With a degree bound the problem of finding polynomial solutions reduces to solving a system of linear equations.
\end{remark}

\begin{example}\label{example6}
Let $M$ be the same as in Example~\ref{example1}. The unramified generalized exponents are:
$$t^{2}(1+2t),\quad t^{-1}(1-4t).$$
There are eight types of elements in $\mathcal{S}(M)$, two of them matching the generalized exponents, as was stated in Example~\ref{example4}. 
The $\preccurlyeq$-smallest elements in $\mathcal{S}(M)$ of these two types are
$$
s_1  = \frac{x+1}{(x+2)^2(x+3)}\quad 
s_2  = \frac{(x+1)^2}{x+3}. $$
The set of candidates for $c\frac{A}{B}$ is $\{s_1,s_2\}$. 
The degree bounds from Remark~\ref{degbound} are $8$ for $s_1$ and $-3$ for $s_2$. So we discard $s_2$ and compute
polynomial solutions for $\tau (P)=\frac{1}{s_1}MP$, finding  
$$\begin{pmatrix}
  x^3(x+2)(x+1)^4\\ -x(x+2)(x+1)^3\\ x(x+2)^2(x+1)^3\\ -x(x+1)(x^4+4x^3+3x^2-1)\\ -x(x+2)(x+1)\\ x(x+3)(x+2)(x+1))
\end{pmatrix}.$$
\end{example}


\section{Computing the unramified generalized exponents of a system} \label{sec:local}

\subsection{Local Hypergeometric  Solutions}
The difference operator $\De = \T - 1$ is given by  $\De(f) = \T(f) - f = f(x+1)-f(x).$
The difference field $(\Ct\(x^{-1}\), \T)$ is an extension of $(\Ct(x), \T)$, so any 
system of first order linear difference equations, $\T(Y) = M Y$  with $M \in\GL_n(\Ct(x))$, can be viewed as a difference system over $\Ct\(x^{-1}\)$.
Moreover,  if it has a hypergeometric solution over $\Ct(x)$
  $$\hyp(c \frac{A}{B})\,P$$ where 
 $c \in \Ct^*$, $A, B \in \Ct[x]$ and $P  \in \Ct[x]^n$ then it has a formal solution \cite{BaCh1} of the form  
  \begin{equation}\label{FHS} Y = \Gamma(x)^{s} c^x x^d F
\end{equation}
where $s \in \Z$, $c \in \Ct^*$, $d  \in  {\Ct}$ and $F\in {\Ct}[[x^{-1}]]^n$
 with $\ord(F) =0$:
 $$F = \left(F_0 + x^{-1}F_1 + \dots \right)\, \quad F_0 \neq 0.$$\\ 
More precisely, one has the relations
$$s = \deg A - \deg B \in \Z,$$ 
$$ d - \slc(A)+\slc(B) = \deg P \in \N. $$ 

 Note that such a formal solution is in fact a hypergeometric solution over $\Ct\(x^{-1}\)$. Indeed,  it is a solution 
of the form $\gamma F$ where $\gamma = \Gamma(x)^s c^x x^d$  is hypergeometric  over $\Ct\(x^{-1}\)$, since   $\frac{\tau(\gamma)}  {\gamma} = cx^s (1+\frac{1}{x})^d = cx^s(1+ \frac{d}{x}+ \cdots)  \in {\Ct}\(x^{-1}\)$.
 For this reason  a formal solution of the form (\ref{FHS}) will be called a {\em local hypergeometric solution}  (\lhs  in short) of our system. Thus, a hypergeometric solution over $C(x)$ is also
 a local hypergeometric solution.  The set $G$ of unramified generalized exponents of the system can be identified with 
 the set of all triplets $(s, c, d) \in \Z \times \Ct^{*} \times \Ct$ involved in possible \lhs (which we still denote by $G$). 
 
 We will explain in this section how to determine this set $G$, directly from the system (i.e. without using {\em cyclic vectors}). 
 
 As in the scalar case ($L(y) =0$ with $L \in \Ct[x][\T]$), the possible values of $s$ are among the integer slopes (if any) of the {\em $\T-$Newton polygon} of $L$ \cite{BaCh1}. To each slope $s$ corresponds a Newton polynomial $P_s(\lambda) \in \Ct[\lambda]$ whose nonzero roots (if any) in $\Ct$ are candidates for the $c$'s. For each pair $(s,c)$ the corresponding $d$'s are obtained as roots in $\Ct$ of the so-called indicial equation (or Newton polynomial) associated to the slope $0$ (if any) of the {\em $\delta-$Newton polygon} of the operator $L_{s,c}$ obtained by performing in the equation $L(y)=0$ the change of variable $y \mapsto (\Gamma(x)^s c^x) \cdot y$. 
 
  Let us note that in the operator case the local data (Newton polygon and polynomials) can be easily obtained from the coefficients of the operator (see \cite{BaCh1, BaCh2}, \cite{ClvHo06}). However, in the system case the local data cannot be obtained directly from $M$ nor from its inverse. Our strategy consists in reducing the given system to a suitable gauge-equivalent one from which we can easily obtain the integer slopes of the Newton polygon and their corresponding Newton polynomials.  In the
systems case, we do not compute the Newton polygon, we only need its integer slopes and those are determined
in another way, using the properties of a {\em super-reduced form} of the system. \\

  Before explaining our method we begin by setting up some notation.   
Any difference system  
\begin{equation}\label{Stau}  
\T(Y) = M Y,  \quad \textit{where }  M \in \mathrm{GL}_{n}(\Ct(x))
\end{equation} 
can be rewritten as 
\begin{eqnarray}\label{SDelta}
\De(Y) = \widehat{M} Y,  \quad \textit{where }  \widehat{M}:= M - I_n  \in \Mat{\Ct(x)}nn.
\end{eqnarray}

Performing the {\em gauge transformation} $Y = T Z$ where $T \in\GL_n(\Ct(x))$ 
yields  the following  \emph{gauge-equivalent} system
$$
    \label{eq:Delta-Gauge}
    \De(Z) =  \wh{N } Z\quad \;\; 
    \wh{N }= T_{\De}[\wh{M}] :=  \T(T^{-1})  (\wh{M}T - \De(T)).
$$


If $Y=  \Gamma(x)^{s} c^x x^d F$ with $F_0\neq 0$ is a \lhs of the system $\T(Y) = MY$ then for all $T \in\GL_n(\Ct(x))$, $Z= TY$ is a \lhs of the equivalent system $\T(Z) = T_\delta[M]Z$.  Indeed,
$$ Z=TY = \Gamma(x)^{s} c^x x^d TF=  \Gamma(x)^{s} c^x x^{d+\nu} \wt{F}$$ where 
$\wt{F}\in {\Ct}[[x^{-1}]]^n$ with $\wt{F}_0 \neq 0$ and where $\nu \in \Z$ with $\nu \geq \ord(T)$. 
Thus, the pairs $(s,c)$  are invariants  under gauge transformations, while the $d$'s may change into $d + \nu$, for some $\nu \in \Z$.\\

We are interested  in the computation of all the triplets  $(s,c, d)  \in \Z \times {\Ct}^* \times {\Ct}$  that  appear in a \lhs of  equation~(\ref{eq:sys}).  In general, those triplets cannot be obtained directly from the matrix system $M$.  However,  we will show that
\begin{itemize}
\item[-]   first, all the possible pairs $(s,c)$ can be obtained from a super-irreducible form, in the sense of \cite{Bar89, BaBrPf08},  of  the matrix $\widehat{M}$;
\item[-]  when a pair $(s,c)$ is obtained, the corresponding indices (the $d$'s) are computed  from  a $0$-simple form (\cite{BaClEl}) of  the system $\delta Y = (x^{-s}c^{-1}M - I_n )Y$.
\end{itemize}

\subsection{Computing the pairs $(s,c)$}\label{sc-section}
For a \lhs $Y=\Gamma(x)^{s} c^x x^d F$ we will refer to $s$ as the slope, $c$ as the characteristic root and $d$ as the index of $Y$. 
The following facts are easy to prove 
\begin{enumerate}
\item The slope $s$ of any  \lhs  of  the system $\T(Y) = MY$ satisfies the following inequalities:  $$ \ord(M^{-1}) \leq s \leq -\ord(M).$$
\item  If $Y$ is a  \lhs  of slope $s$  then for any integer $p$,   $ \wt{Y}= \Gamma(x)^{p}Y$ is a \lhs with slope $\tilde{s}:=s+p$,   of  the difference system $\tau(\wt{Y}) = x^pM \wt{Y}$.   Hence, if we choose $p > - \ord(M^{-1})$  then  we can make  $\tilde{s} \geq \ord(x^{-p}M^{-1}) >0$.
Thus,  even if it means to replace $M$ by $x^pM$ with $p= - \ord(M^{-1}) +1$, one can suppose that 
$\ord(M^{-1}) \geq 1$ and in this case $\ord(M) \leq -1$ and  the slopes  of the  local hypergeometric solutions of   (\ref{Stau}) are positive.
\end{enumerate}

In the sequel, we assume that  $\ord(M^{-1}) \geq 1$  so that $\ord(M) \leq -1$ and  the slopes of  the  local hypergeometric solutions of  (\ref{Stau})  are  positive.    It follows that  the matrix $\widehat{M} = M-I_n$ has  the same valuation  as M. 
 
Put $ q:=-\ord(\wh{M}) \geq 1$ and let $1 \leq s \leq q$ be an integer.   We will look for a necessary and sufficient condition for $s$ to be a slope of some  \lhs  of (\ref{Stau})  that can be read easily from the first few coefficients in the expansion of $\wh{M}$. 
  
 For $ 1 \leq i \le n$, let $\nu^{(s)}_i = \min{(\ord(\wh{M}_{i,.}), -s)}$
 where $\wh{M}_{i,.}$ denotes the $i$th row of the matrix $\wh{M}$. 
 Put  
$$D^{(s)}:=\diag(x^{\nu^{(s)}_1}, \ldots, x^{\nu^{(s)}_n})   \;\; \hbox{  and  } \;\;   N^{(s)}:= D^{(s)}\wh{M}.$$ 

It is clear that $\ord{(N^{(s)})} \geq 0$ and  that $\ord{(D^{(s)})}\geq -s$ 
(in fact, the entries of the matrix $x^s D$ 
are polynomials in $x^{-1}$). It then makes sense to set 
$$ D^{(s)} = x^{-s} \left (D^{(s)}_0 +\Or(x^{-1})\right)  \;\; \hbox{  and  } \;\;  N^{(s)}=  N^{(s)}_0 +\Or(x^{-1}).$$

Our system  (\ref{Stau})  can be rewritten as 
\begin{eqnarray*}\label{L}
 D^{(s)}\De(Y) - N^{(s)} Y =0. 
\end{eqnarray*} 
Let us now  replace  $Y$  by the expression in (\ref{FHS}). One has  
$$ \De(Y) = \T\left(\Gamma(x)^{s} c^x x^d\right) \De(F) + \De\left(\Gamma(x)^{s} c^x x^d\right)\, F. $$
Now, we have $\De(F) = F(x+1)-F(x)= \Or(x^{-2})$ and 
\begin{multline*} \T\left(\Gamma(x)^{s} c^x x^d\right) = cx^s \left(1+ x^{-1}\right)^d \Gamma(x)^{s} c^x x^d\\=  cx^s\left(1+ dx^{-1}+\Or(x^{-2})\right) \Gamma(x)^{s} c^x x^d.
\end{multline*}
Since $s >0$, one has 
$$  \De\left( \Gamma(x)^{s} c^x x^d\right) =  
 x^s\left(c + \Or(x^{-1})\right) \Gamma(x)^{s} c^x x^d,$$ 
hence 
$$ \De\left( \Gamma(x)^{s} c^x x^dF\right) =  x^s \Gamma(x)^{s} c^x x^d \left(c F_0 +  \Or(x^{-1})\right).$$
Now 
\begin{multline*} D^{(s)}\De(Y) -N^{(s)} Y  \\=   \left(\Gamma(x)^{s} c^x x^d  \right)\left( x^{s}D^{(s)}  \left(cF_0 +  \Or(x^{-1})\right) - N^{(s)}   \left(F_0 +  \Or(x^{-1})\right)\right).
\end{multline*}
Thus 
\begin{eqnarray*}\label{AL} D^{(s)}\De(Y) - N^{(s)} Y =  
\Gamma(x)^{s} c^x x^d  \left( (c D^{(s)}_0 -N^{(s)}_0)F_0 + \Or (x^{-1}) \right).
\end{eqnarray*} 
If a  nonzero formal solution of the form (\ref{FHS})  exists then one must have 
  $$(c D^{(s)}_0 -N^{(s)}_0)F_0 =0,$$
 and therefore 
 \begin{equation}\label{carcEq}
  \det{(c D^{(s)}_0 -N^{(s)}_0)} = 0,
  \end{equation}
since $F_0 \neq 0$.
 
Thus, if the system (\ref{SDelta}) has a solution of the form (\ref{FHS})  then $c$ must annihilate  the polynomial   $E_s(\lambda):=\det{(N^{(s)}_0 - \lambda D^{(s)}_0)}$. 
However, it may happen that this polynomial  
vanishes identically in $\lambda$, in which case 
 it gives no useful  information. This motivates the following definition     

\begin{definition}($s$-simple).  System (\ref{SDelta})  is said to be $s$-{\em simple}  if the polynomial 
 $E_s(\lambda)=\det{(N^{(s)}_0- \lambda D^{(s)}_0)} \not\equiv 0$.
\end{definition}

It is important to notice that any system can be reduced to an  $s$-simple gauge-equivalent one. An algorithm to perform  such a reduction is presented in \cite{BaClEl} where it is shown that a super-irreducible system is necessarily $s$-simple  for  all $s \in \{ 1,\dots, q\}$. Hence, it is sufficient to compute a super-irreducible form of (\ref{SDelta}) to get all the  $E_s(\lambda)$ for  $ s=1, \dots, q$. This can be done using one of  the super-reduction algorithms presented in \cite{BaBrPf08, Bar89}.\\  

For $ 1\leq s  \leq q$, let $C_s$ be the set of nonzero roots of $ E_s(\lambda)$ in $\Ct$. If $C_s = \emptyset$ then there is no \lhs of slope $s$ and this value of $s$ can be discarded; otherwise the set $\{s\} \times C_s$  
gives candidates for the pairs $(s,c)$ which occur in the possible hypergeometric solutions with slope $s$.   
In summary, the following algorithm produces a set of candidates for the pairs $(s,c)$ occurring in the possible hypergeometric solutions of the input system.

\begin{algorithm}[$\textit{sc-Pairs}$]~\\  
Input: $M \in\GL_n( \Ct(x))$\\
Output: $SC$ the set of pairs $(s,c)$ 
\begin{itemize}
\item[1.] If $\ord{(M^{-1})} \leq0$ then set  $p:=-\ord{(M^{-1})} +1$ and replace the matrix $M$ by $x^pM$.
\item[2.] Let $\wh{M}:=M - I_n$ and $ q:= -\ord{(M)}$. 
\item[3.] Apply the super-reduction algorithm to the matrix $\wh{M}$. {\em This produces the list of the polynomials $E_s(\lambda)$ for  $ s=1, \dots, q$}.
\item[4.] $SC:=\emptyset$.
\item[5.]  For $s=1, \dots, q$  do 
\begin{itemize}
\item Compute the set $C_s:=\{c \in \Ct^* \; \vert \; E_s(c)=0\}$.
\item If $C_s \neq \emptyset$ then $SC:= SC \cup \{s-p\} \times C_s$.
\end{itemize}
\end{itemize}
\end{algorithm}

\subsection{Computing the exponents $d$'s associated to a pair $(s,c)$}\label{sec:d}
Let a system (\ref{Stau}) be given. We assume that we have computed the set $SC$ of the $(s,c)$-pairs belonging to \lhs of (\ref{Stau}) and that $SC$ is not empty. Let $(s,c) \in SC$. We will now explain how to determine the set of the indices $d \in \Ct$  such that the system possesses a \lhs  of the form (\ref{FHS}). \\
If $Y$ is such a solution then $\Gamma(x)^{-s}c^{-x}Y = x^dF$ is a formal solution of the system $\tau Z = M_{s,c} Z$ where $M_{s,c}:= c^{-1}x^{-s}M$ and hence $d$ (modulo $\Z$) must satisfy 
an algebraic equation which can be computed from a so called simple form of the matrix  $\wh{M}_{s,c}$.\\

Here again the system $\delta(Z) = \wh{M}_{s,c} Z$ is rewritten as 
\begin{eqnarray}\label{deltaMsc}
 xD\De(Z)-N Z =0, 
\end{eqnarray}  
with
$$D:=\diag(x^{\nu_1}, \ldots, x^{\nu_n})   \;\; \hbox{  and  } \;\;   N := xD\wh{M}_{s,c}$$ 
where $\nu_i = \min{(\ord(\wh{M}_{s,c,i})-1, 0)}$, $\wh{M}_{s,c,i}$ being the $i$th row of the matrix $\wh{M}_{s,c}$.
By construction, one  has $\ord{(N)} \geq 0$ and  $\ord{(D)}\geq 0$  so that 
$$D = \left (D_0 +\Or(x^{-1})\right)  \;\; \hbox{  and  } \;\;  N =  N_0 +\Or(x^{-1}).$$
  
Plugging the expansions of $D, N$, $Z=x^d(F_0+\Or(x^{-1}))$ and $x\De(Z) =  x^d(dF_0+ \Or(x^{-1}))$  in equation~(\ref{deltaMsc}), one gets
\begin{eqnarray*}\label{AL0}  
x^d  \left( (d D_0 -N_0)F_0 + \Or (x^{-1}) \right)=0.
\end{eqnarray*} 
which implies 
$$ (dD_0 -N_0)F_0 =0$$
and, since $F_0 \neq 0$,
\begin{eqnarray}\det\left( dD_0 -N_0 \right) =0.\end{eqnarray} 

Thus the index $d$ must be a root of the polynomial $E_0(\lambda):= \det\left( \lambda D_0 -N_0 \right)$.\\  In order to get a finite set of candidates for the $d$'s, it suffices to make sure that the system $\De(Z) = \wh{M}_{s,c} Z$ is such that $E_0(\lambda)$ is not identically zero (in which case we say that it is $0$-simple).  As mentioned before, it is always possible  to reduce a given difference system to a gauge-equivalent system which is simple \cite{BaClEl}.\\ 

The existing (and implemented) algorithms for computing simple forms \cite{BaClEl} produce a polynomial $E_0(\lambda)$ and a gauge transformation $T$ which is polynomial in $x^{-1}$ and such that its inverse $T^{-1}$ is polynomial in $x$. \\ 
Since $Tx^dF = x^{\tilde{d}}(\wt{F}_0+ \Or(x^{-1}))$, with $\wt{F}_0\neq 0$ and $E_0(\tilde{d}) =0$, it follows 
that $d \in \tilde{d}+\ord(T) + \N$. Hence we only know $d$ modulo $\N$ and this is sufficient for what we want to do.
 
\begin{algorithm}[Unramified Generalized Exponents]~\\
Input: $M \in\GL_n(\Ct(x))$\\
Output: the set  $G$ of the triplets $(s,c,d)$ involved in the \lhs of $\tau(Y) = M Y$ 
\begin{itemize}
\item[1.] $SC:=\textit{sc-Pairs}(M)$.
\item[2.] $G:=\emptyset$.
\item[3.]  For $(s,c) \in SC$  do 
\begin{itemize}
\item Let $\wh{M}_{s,c}:= c^{-1}x^{-s}M-I_n$.
\item Compute a $0$-simple form  of $\wh{M}_{s,c}$. This produces a polynomial $E_0(\lambda)$ and a gauge transformation $T$ which is polynomial in $1/x$. 
\item Compute the set $R_{sc}:=\{d  \in \Ct \; \vert \; E_0(d)=0\}$.
\item If $R_{sc} \neq \emptyset$ then $G:= G \cup \left(\{(s,c)\} \times (R_{sc}+\ord(T))\right)$.
\end{itemize}
\end{itemize}
\end{algorithm}

\section{Application: Beke-Bronstein Algorithm}
The Beke-Bronstein algorithm (\cite{Beke} with improvements in \cite{Bronstein:1994:IAF:190347.190436}) for factoring differential operators applies to polynomials and difference operators as well. In the recurrence case,
this leads to a recurrence system, and computing hypergeometric solutions. 
In the following we reformulate the Beke-Bronstein approach in the language of {\em exterior algebra}, first for the polynomial case, and then the difference case  (it works almost the same for both cases).
\subsection{Polynomial case}

Let $f\in \mathbb{Q}[y]$ be the polynomial that is to be factored. Assume $y\nmid f$ throughout this section. If $g$ is a factor of $f$ then so is $ag$ where $a\in\mathbb{Q} - \{0\}$.
To identify these factors we use $\mathbb{P}(\mathbb{Q}[x]) = (\mathbb{Q}[x]-\{0\})/(\mathbb{Q}-\{0\})$.
If $g \in \mathbb{Q}[x]-\{0\}$ then let $[g]$ denote its image in $\mathbb{P}(\mathbb{Q}[x])$. Call $[g]$ a \emph{projective factor} of $f$ if $g\mid f$. Let $\mathrm{Fact}_m(f)=\{[g]:g\mid f, \deg(g)=m\}$ 
be the set of degree-$m$ projective factors. 

Let $M=\mathbb{Q}[y]/(f)$. Then $M$ is a $\mathbb{Q}[y]$-module.
For $g\mid f$ with $\deg(g)=m$, let $\mu(g)=g\wedge yg\wedge \cdots \wedge y^{n-m-1}g\in \bigwedge^{n-m}M$ where $n$ is the degree of $f$. Suppose $g=\sum_{i=0}^m b_iy^i$. 
\begin{lemma}\label{lemma:mug}
The coordinate of $\mu(g)$ with respect to $y^i\wedge y^{m+1}\wedge y^{m+2}\wedge \cdots \wedge y^{n-1}$ is $b_i(b_m)^{n-m-1}$. 
\end{lemma}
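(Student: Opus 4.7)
The plan is to identify $\mu(g)\in\bigwedge^{n-m}M$ with a Plücker coordinate of the $(n-m)$-dimensional subspace of $M$ spanned by $g,yg,\ldots,y^{n-m-1}g$, then compute the specific minor by observing that it is lower triangular.

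First I would fix the standard basis of $\bigwedge^{n-m}M$: since $1,y,\ldots,y^{n-1}$ is a $\Q$-basis of $M$, the wedges $y^{i_1}\wedge y^{i_2}\wedge\cdots\wedge y^{i_{n-m}}$ with $0\leq i_1<i_2<\cdots<i_{n-m}\leq n-1$ form a basis of $\bigwedge^{n-m}M$. If $C$ is the $(n-m)\times n$ matrix whose $(j{+}1)$-th row ($j=0,\ldots,n-m-1$) lists the coefficients of $y^jg$ with respect to $1,y,\ldots,y^{n-1}$, then the coordinate of $\mu(g)=g\wedge yg\wedge\cdots\wedge y^{n-m-1}g$ along $y^{i_1}\wedge\cdots\wedge y^{i_{n-m}}$ is the determinant of the $(n-m)\times(n-m)$ submatrix of $C$ formed by columns $i_1,\ldots,i_{n-m}$. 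This is a standard fact about decomposable wedge products.

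Next I would write down the entries of $C$ explicitly. Since $g=\sum_{i=0}^m b_iy^i$, the coefficient of $y^k$ in $y^jg$ is $b_{k-j}$, with the convention that $b_\ell=0$ whenever $\ell\notin\{0,1,\ldots,m\}$. Now I specialize to the target basis element $y^i\wedge y^{m+1}\wedge y^{m+2}\wedge\cdots\wedge y^{n-1}$ (the formula holds trivially if $i>m$ since then $b_i=0$ and the wedge is either $0$ or repeats, so I may assume $0\leq i\leq m$). The relevant submatrix $C'$ has columns indexed by $i,m+1,m+2,\ldots,m+(n-m-1)$. For the column indexed by $m+k$ with $k\geq 1$, the entry in row $j$ is $b_{m+k-j}$; this vanishes when $j<k$ (because then $m+k-j>m$) and equals $b_m$ exactly when $j=k$.

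Hence $C'$ is lower triangular once the rows are taken in their natural order $j=0,1,\ldots,n-m-1$: row $0$ has $b_i$ in the column indexed by $i$ and zeros elsewhere, while for $k\geq 1$ the column indexed by $m+k$ has its leading nonzero entry $b_m$ in row $k$. The determinant is therefore the product of the diagonal entries, namely
\[
\det(C')=b_i\cdot\underbrace{b_m\cdot b_m\cdots b_m}_{n-m-1\text{ times}}=b_i(b_m)^{n-m-1},
\]
which is the claimed coordinate. The only mild subtlety is bookkeeping with the $b_\ell=0$ convention for $\ell$ outside $\{0,\ldots,m\}$, and checking that the edge case $i=m$ (which makes the diagonal entry of column $i$ equal to $b_m$) is consistent with the formula; both come out correctly.
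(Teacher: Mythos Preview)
Your proof is correct and is precisely the straightforward calculation the paper alludes to; the paper's own proof consists of the single sentence ``It is a straight-forward calculation.'' Your explicit identification of the relevant minor as a lower-triangular matrix with diagonal $b_i, b_m, \ldots, b_m$ carries out that calculation in full.
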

\begin{proof}
It is a straight-forward calculation.
\end{proof}
If $y\nmid f$ then $\mu(g)\neq 0$. For $a\in \mathbb{Q}^*$, we have $\mu(ag)=a^{n-m}\mu(g)\neq 0$. Thus 
$$\mu([g]):=\mathbb{Q}\mu(g)\subset \bigwedge^{n-m}M$$
is well-defined; $\mu([g]) \in \mathbb{P}(\bigwedge^{n-m}M)$ does not depend on the choice of representatives. 
Lemma~\ref{lemma:mug} shows how to recover $[g]$ from $\mu([g])$. This gives a map $\eta:\mathbb{P}(\bigwedge^{n-m}M)\rightarrow \mathbb{P}(\mathbb{Q}[y])$
such that $\eta\circ \mu$ is the identity on $\mathrm{Fact}_m(f)$.

\begin{proposition}\label{thm:bronstein} $\mu(\mathrm{Fact}_m(f)) \subseteq E_m(f)$ where
$$E_m(f):=\{\text{1-dimensional subspaces of }\bigwedge^{n-m}M \text{ that are }\mathbb{Q}[y]\text{-modules}\}.$$
\end{proposition} \vspace{3pt}


Now $\mathbb{Q}v$ is in $E_m(f)$ if and only if $v$ is an eigenvector
of the action of $y$ on $\bigwedge^{n-m}M$.
Lemma~\ref{lemma:mug} converts such $v$ to a polynomial $g$.
Then $g \in \mathrm{Fact}_m(f)$ (i.e. $g | f$) if and only
if the so-called Pl\"ucker relations hold, which in this
case are equivalent to $(y^i g) \wedge v = 0$ for $i < n-m$.  (We may omit $i=0$ since $g \wedge v = 0$ will hold by construction.) Indeed,
if $g | f$, then the corresponding $v = \mu(g)$ is the wedge product of $y^i g$ for $i < n-m$,
and hence $(y^i g) \wedge v = 0$. 

\begin{example}
Suppose $f=y^4+2y^3+3y^2+2y+2$. The goal is to illustrate Beke-Bronstein by computing
factors of $f$ of degree 2.  
In $M=\mathbb{Q}[y]/(f)$ the relation
$$y^4=-2y^3-3y^2-2y-2$$
holds. From that we compute the action of $y$ on a basis of $\bigwedge^2 M$:
\begin{equation*}
\begin{split}
{1\wedge y^3} & \mapsto y\wedge y^4= y\wedge (-2y^3-3y^2-2y-2) \\
{y\wedge y^3} & \mapsto y^2\wedge y^4= y^2\wedge (-2y^3-3y^2-2y-2) \\
{y^2\wedge y^3} & \mapsto y^3\wedge y^4= y^3\wedge (-2y^3-3y^2-2y-2) \\
1\wedge y & \mapsto y\wedge y^2 \\
1\wedge y^2& \mapsto y\wedge y^3 \\
y\wedge y^2& \mapsto y^2\wedge y^3 \\
\end{split}
\end{equation*}
which can be represented by the matrix
$$
\begin{pmatrix}
    0 & -2 & 0 & 2 & 0 & -3  \\
   0 & 0 & -2 & 0 & 2& 2 \\
2    & 2&3 &0 &0 &0 \\
   0 & 0 & 0 &0 & 0 & 1 \\
   0 &1 &0 &0 &0 & 0\\
   0 &0 &1 &0 & 0& 0
\end{pmatrix}.
$$
The eigenvectors are
$\left( -1 \ 0 \ -1 \ -1 \ 0 \ 1 \right)$ and $\left( 1 \ 1 \ 1/2 \ 2 \ 2 \ 1 \right)$.
    The Pl\"ucker relation(s) for $n=4$, $m=2$ are well known and are given by a single polynomial 
    $$X_{0,1}X_{2,3}-X_{0,2}X_{1,3}+X_{0,3}X_{1,2},$$
    where $X_{i,j}$ is the coordinate corresponding to the basis vector $y^i\wedge y^j$. Both eigenvectors satisfy this Plücker relation.
    To obtain the corresponding factors of $f$, we take the $y^i \wedge y^3$ coefficients of the above eigenvectors. We sorted the
    basis of $\bigwedge^2 M$ in such a way that these coefficients are the first three entries of the eigenvectors. This
   leads to the factors 
    $$-1-y^2, \quad 1+y+\frac{1}{2}y^2.$$
\end{example}

\subsection{Difference case}\label{sec:bronsteindifference}

Let $D=\mathbb{Q}(x)[\tau]$, let $L,R \in D$, and let
$[R] = \mathbb{Q}(x) R$. We call $[R]$
a \emph{projective factor} of $L$ if $R$ is a right-hand factor of $L$.
As before, $$\mathrm{Fact}_m(L):=\{[R]:R \text{ is an order-}m \text{ right-hand factor of }L\}.$$ Denote $M=D/DL$. Define
$$\mu:\mathrm{Fact}_m(f)\rightarrow \bigwedge^{n-m}M,$$
$$[R]\mapsto \mathbb{Q}(x) \mu(R)$$
where $\mu(R) = R\wedge \tau R\wedge \cdots \wedge \tau^{n-m-1}R$.
Similar to Theorem~\ref{thm:bronstein} we have
\begin{proposition} $\mu(\mathrm{Fact}_m(L)) \subseteq E_m(L)$ where
$$E_m(L):=\{1\text{-dimensional }\mathbb{Q}(x)\text{-subspaces of }\bigwedge^{n-m}M\text{ that are }D\text{-modules}\}.$$
\end{proposition} \vspace{3pt}

As in the polynomial case, finding $\mathrm{Fact}_m(L)$
reduces to finding 1-dimensional submodules of $\bigwedge^{n-m}M$. Here this
means finding hypergeometric solutions.

\begin{example}
Let $L =$
\begin{multline*}
    (x^3+5x-6) \tau^4+(x^4+2x^3+6x^2-3x-18)\tau^3-(x^4+5x^3+14x^2+28x-12)\tau^2 \\ -(x^5+3x^4+8x^3+3x^2-21x-18)\tau+{3x^2(x^2+3x+8)}.
\end{multline*}
To find order-$2$ factors of $L$, consider the $D$-module $\bigwedge^2 D/DL$. Like before, computing the action of $\tau$ on $\bigwedge^2 D/DL$ with respect to the basis
$$e_0=1\wedge \tau^3, e_1=\tau\wedge \tau^3,e_2=\tau^2\wedge \tau^3,e_3=1\wedge \tau,e_4=1\wedge \tau^2,e_5=\tau\wedge\tau^2$$
gives a matrix very similar to the one in the previous example.
It has two hypergeometric solutions, just like the previous example had two eigenvectors, and again both satisfy the Plücker relation(s). From the first three entries one finds:
$$\tau^2-x,\quad \tau^2-x\tau-3.$$
\end{example}


\section{Implementations and experiments} \label{SectionNew}

We have two factoring implementations. The first implementation~\cite{HypSolsSystems} computes the hypergeometric solutions with Algorithm Version III.
This implementation currently uses an old implementation for super-reduction from 1997.
It should reduce matrices over $C((x^{-1}))$, however, at the moment
it computes over $C(x)$, which can cause expression swell.  The implementation is efficient for large matrices over $C[x,1/x]$. This suggests
that a version that uses truncated power series (increasing precision when needed) should be equally efficient.

The second factoring implementation, written by van Hoeij, is available in the LREtools package in Maple.
It computes generalized exponents \cite[Section 3.2]{YCHA}, not of the matrix, but of the operator $L$, and then constructs the generalized exponents of the matrix from them.
The implementation for this is very efficient; truncated power series are used to avoid unnecessary expression swell.

Next, the factoring implementation constructs, for each individual entry of the solution vector, a bound for the denominator, a bound for the content of the numerator, and a degree bound.
For more on this, see~\cite[Chapter 6]{Zhou} which describes how degree bounds for right-factors $R$ of $L$ are computed from the generalized exponents of $L$.

Source code for generalized exponents, degree bounds and factoring is available at \url{http://www.math.fsu.edu/~hoeij/algorithms/RFactors}.
The code is also in Maple's LREtools package. The help page contains further information about generalized exponents.
Degree bounds~\cite[Chapter 6]{Zhou} computed from generalized exponents are also used in {\tt LREtools[MinimalRecurrence]},
which computes the provably minimal order recurrence (for a sequence given by a recurrence and initial terms).


We successfully tested the second implementation for factors of order $m=6$ of operators of order $n=12$.
This involves computing hypergeometric solutions of a system of order $N$ by $N$,
where $N ={\rm binomial}(n,m) = 924$. With a cyclic vector approach this would not be possible because the cyclic vector operator
would be astronomically large. In fact, order $n=8$ is already problematic.

As an example, we took $L$ to be the LCLM (Least Common Left Multiple) of $L_1 := \tau^4+\tau^3-x \tau^2 + 5\tau-x$ and $L_2 := \tau^4+x \tau^3-x^2 \tau + 2x + 1$.  While LCLM's tend to increase the degree in $x$,
the operator $L$ is still of modest size, the degree in $x$ is only 10.
By construction, $L$ has two very small right-factors of order $m=4$.
Computing factors of $L$ of orders $m=1,2,3,4$ with
our implementation {\tt LREtools[RightFactors]} in Maple produces $\emptyset, \emptyset, \emptyset, \{ L_1, L_2\}$ in a fraction of a second ($t_{\rm LRE}$ in the table below).
To do this with the cyclic vector method \cite{AbramovPetkovsekRyabenko2015} we have to compute the cyclic vector operator (time $t_{\rm cv}$) for the $D$-modules $\bigwedge^m D/DL$,
and then compute its hypergeometric solutions (time $t_{\rm cv}^{\rm sol}$ ).

The problem is the growth of these operators. The order is binomial(8,$m$) which is reasonable. More problematic growth is in the degree in $x$ and the overall
size (measured by the Maple {\tt length} command).
\begin{center}
\begin{tabular}{  | c | c | c |   c | c | c | c | } \hline
$m$  & order & degree & size & $t_{\rm cv}$ & $t_{\rm cv}^{\rm sol}$ & $t_{\rm LRE}$ \\ \hline
1   &   8  & 10 & 980 & 0 & 0.06 s & 0.06 s\\ \hline
2    &  28  &  268 & 1572804 & 16.5 s & 1.57 s & 0.08 s\\ \hline
3  &  56 & 1382 & 99128859 & 4 h & 15 h  & 0.06 s   \\ \hline
4  & 70 & ? & ?  & $>2$ d& ? & 0.12 s  \\ \hline
\end{tabular}
\end{center}
As the size grows, it takes a very long time to compute these cyclic vector operators. The cyclic vector operator for $m=3$ took 4 hours to compute and Maple reported an additional 15
hours to compute its hypergeometric solutions. For $m=4$ we have not yet determined the cyclic vector operator, it is still running after more than 2 days, but even
if we find it, it is likely that computing its hypergeometric solutions will be time consuming.

In contrast, {\tt LREtools[RightFactors](L,4)} only takes 0.12 seconds to find the factors of order $m=4$.
%
The reason for this dramatic difference in CPU time is the fact that, although the dimension of the exterior power matrix grows as ${\rm binomial}(8,m)$,
the degrees and {\tt length} of the entries in the matrix do not grow with $m$.
In contrast, the coefficients of the cyclic vector operators for these matrices grow dramatically with $m$, both in degree and in {\tt length}.

In 2004, the second author implemented the cyclic vector approach for both the differential and the difference case.
The difference case was shared with a few colleagues while the differential case was added to Maple's {\tt DEtools[DFactor]} in 2004.
Only order $n=4$ was implemented because expression swell made the cyclic vector approach inefficient for higher order
(even for $n=4$ the newer  {\tt LREtools[RightFactors]} is much faster).

We experimented with the $p$-curvature approach from \cite{ClvHo06} to reduce the number of cases for $c A/B$. After
experiments we settled on using just one prime, $p=2$, in our implementation \cite{HypSolsSystems}, and none in {\tt LREtools[RightFactors]}.
The question remains how to devise a good strategy for the number of primes (if any) to be used.  Equation~(\ref{eqDen}) and Remark~\ref{remDen}
can also reduce the number of cases for $c A/B$. We have not yet used this because in experiments
the number of cases was often not high after comparing ${\rm deg}(A)-{\rm deg}(B)$ and ${\rm slc}(A/B)$ with $s,d$ in step BP2.
But it is certainly possible that there could be applications where the number of cases remains high,
in which case the above mentioned improvements should be added to the implementation.

\bibliographystyle{plain}

\end{document}